\documentclass[aps,prl,showpacs,twocolumn,superscriptaddress]{revtex4-2}
\usepackage{amsmath,amssymb}
\usepackage{graphicx}
\usepackage{dcolumn}
\usepackage{bm}
\usepackage{dsfont}
\usepackage{amsthm}
\newtheorem{lemma}{Lemma}[section]

\begin{document}
\title{Chiral state conversion near an exceptional point: speed-noise competition}

\author{Qing-Wei Wang}
\email[]{qingweiwang2012@163.com}
\affiliation{School of Information Engineering, Zhejiang Ocean University, Zhoushan, Zhejiang 316022, China}

\date{\today}

\begin{abstract}
  One intriguing property of non-Hermitian systems is the  breakdown of adiabatic theorem and chiral state conversion as the system dynamically encircles exceptional points. However, the subtle dependence of the chiral dynamics on the loop geometry, the starting point, the encircling speed and especially the noise has not been studied systematically. Here we propose a non-chirality degree $\chi_c$ to measure the chirality quantitatively and analyze it in dynamics without noise by exact solution and dynamics with noise by numerical integration. The exact dynamics starting from the broken phase show chirality oscillations, which are extremely sensitive to noise when the speed is small. The encircling speed and the noise strength are found to compete with each other in determining $\chi_c$, resulting in two distinguished limits, namely the noisy limit and the clean limit. The critical boundary between the two limits satisfies a simple scaling law, which could be explained in terms of first-order perturbation theory and the condition number of the transfer matrix. Our findings reveal the essential role played by noise in non-Hermitian dynamics and are relevant for both theoretical and experimental investigations.  
\end{abstract}
\maketitle
{\it Introduction.---} Non-Hermitian systems supports a new kind of degeneracies, known as exceptional points (EPs) \cite{PhysRevE.61.929,Graefe_2008,GARRISON1988177, Keck_2003,Berry2004}, where both eigenvalues and eigenvectors coalesce. This results in a number of intriguing properties such as loss-induced transparency \cite{PhysRevLett.103.093902}, single-mode lasing \cite{Feng972,Hodaei975}, unidirectional invisibility \cite{PhysRevLett.106.213901,Feng2013}, to mention a few. 
Of particular interest is the topological properties associated with the quasistatic encirclement of an EP.
It was found that the instantaneous eigenstates swap with each other at the end of the parameter cycle with one acquiring a geometric phase \cite{PhysRevA.72.014104,Heiss_2012}. This can be attributed to the branch point character of the EP and has been observed in microwave cavities \cite{PhysRevLett.86.787} and exciton-polariton systems \cite{nature526-554}.
However, dynamical evolution around an EP is drastically different from the quasistatic encirclement \cite{Uzdin_2011, PhysRevA.88.010102,PhysRevA.88.033842, PhysRevLett.99.173003, PhysRevLett.103.123003, PhysRevLett.116.133903, PhysRevLett.118.040401,  PhysRevLett.118.093002,PhysRevX.8.021066,PhysRevLett.125.187403}, leading to chiral state conversion in the sense that different encircling directions result in different output states. These surprising effects were recently observed in microwave \cite{nature537-76}, optomechanical  systems \cite{nature537-80}, coupled pendulums \cite{EVEN2024118239, rspa.2024.0335}, coupled waveguides \cite{nature562-86, LSA8-88, CP2-63, PhysRevLett.124.153903, PhysRevLett.129.127401, nc13-2123, LSA14-77, APR12-041409_2025, APL127-122201_2025}, and electric circuits \cite{CP3-140_2020,PhysRevLett.124.133905, lpor.202100675, nwac259_2023}. 

Some theoretical and experimental studies, however, have shown that the chiral state transfer is not necessarily associated with EP-encirclement \cite{PhysRevA.96.052129, PhysRevA.102.040201, PhysRevA.103.023531, nature605-256_2022, PhysRevResearch.5.033053, LSA13-65}. For example, chiral behavior can also be observed without encircling any EP while nonchiral behavior could be observed in EP-encircling dynamics, and dynamically encircling an EP along two homotopic loops in the parameter space may result in distinct outcomes when multiple exceptional points are involved \cite{PhysRevA.99.063831, nc9-4808_2018}. The chirality is found to depend not only on the loop geometry, but also on the starting point \cite{PhysRevX.8.021066,CP2-63} and the encircling speed (or the degree of adiabaticity) \cite{CP3-140_2020, nature605-256_2022,PhysRevResearch.5.033053, PhysRevLett.128.160401, APLQ1-046107_2024,  PRXQuantum.6.020328}. In addition, someone demonstrated robustness of the chirality with respect to noise \cite{CP8_91_2025, PhysRevResearch.7.013159}, while some others \cite{arxiv250204214} showed that the presence of noise is essential and would drastically alters the dynamics of non-Hermitian systems. However, systematic studies of the chirality in the noise-speed parameter space is till lacking. 

One important issue here is to search for general rules or universal relations in the chiral/nonchiral state conversion process, which has a complicated dependence on the loop topology, starting point and dynamical parameters such as the encircling speed, noise or dissipation. Previous efforts have mainly focused on the loop topology with multiple EPs \cite{PhysRevLett.123.066405, PhysRevLett.127.253901,PhysRevLett.130.157201, CP7_109_2024, NC15_1369_2024} in the adiabatic limit, and no general features have been found concerning the dynamical parameters's effect on the chirality.   

In this Letter, we theoretically analyze the state conversion processes in cyclic dynamics of a non-Hermitian two-level system near its EPs, focusing on the dependence of the chirality measured by a \emph{non-chirality degree} on the dynamical parameters. The state conversion processes show chirality oscillations in the absence of noise for loops starting from points where the Hamiltonian has imaginary eigenvalues, and such oscillations are extremely sensitive to noise. So we systematically investigate the noise effect by adding a white noise term in the Hamiltonian, and then examine the dependence of chirality on the loop radius, the starting point, the speed and the noise strength. We discover a general speed-noise competition behavior and its critical boundary scaling relation, which is in turn explained through a first-order perturbation theory. Our findings are relevant to both experimental and theoretical studies on non-Hermitian dynamics.

{\it Theoretical Model and Symmetries.---}We consider a two-state system governed by the evolution equation $i\hbar\partial_t|\psi(t)\rangle= H(t) |\psi(t)\rangle$, where the time-dependent Hamiltonian is given by
\begin{equation}\label{eq:Hamiltonian}
  H(t)=\kappa \sigma_x + h_z(t)\sigma_z,
\end{equation}
with the state vector $|\psi(t)\rangle=(a(t),b(t))^T$. The coupling strength $\kappa$ is kept constant while the time-varying $h_z(t)=\delta(t) + ig(t)$ is constructed from the gain/loss $g(t)$ and level detuning $\delta(t)$.
For simplicity, we would take the natural units $\hbar=\kappa=1$ in the following. Then two EPs are established in the parameter space at $g=\pm1$ and $\delta=0$, where the eigenvalues coalesce with the corresponding eigenvectors collapsing to $(\pm i,1)^T$. 
This Hamiltonian has real (imaginary) eigenvalues for $\delta=0$ and $|g|<1 (>1)$, which would be referred to as symmetric (broken) phase\cite{PhysRevX.8.021066}. 


Now we consider a class of exactly solvable circular loops\cite{PhysRevLett.118.093002,PhysRevX.8.021066} described by 
\begin{equation}\label{eq:loops}
  h_z(t)=i \left(g_0-\rho\, e^{i(\omega t+\theta_i)} \right),
\end{equation}
where $\omega$ denotes the angular velocity of the encircling and $\rho$ represents the radius of the circle centered at $(\delta=0, g=g_0)$. The equation (\ref{eq:loops}) represents a counterclockwise (CCW) loop if $\omega>0$, with the starting point described by an angle $\theta_i$. A clockwise (CW) loop could be obtained by making the replacement $\omega\rightarrow -\omega$. The dynamics is conveniently described by the transfer matrix $S(\theta_f,\theta_i)$ in terms of 
\begin{equation}\label{eq:evolution:S}
  [a(t),b(t)]^T =S(\theta_f,\theta_i)\, [a(0),b(0)]^T,
\end{equation}
where $\theta_f=\theta_i+\omega t$ locates the final point of the loop. 

It's instructive to first analyse the symmetry of the transfer matrix before solving it exactly \cite{*[{See Supplemental Material for symmetry analysis of the transfer matrix, exact solution, asymmetric analysis, Floquet analysis, and more numerical results.}][{}] sm}. 
We can show that 
(i) $S_{\text{CCW}}(\theta_f,\theta_i)=\sigma_z  S_{\text{CW}}(-\theta_f,-\theta_i)^\ast \sigma_z$, and 
(ii) $S_{\text{CCW}}(\theta_f,\theta_i)=\left[ S_{\text{CW}}(\theta_i,\theta_f) \right]^T$,
where the subscript CCW/CW indicates the loop direction, and 
(iii) $S(\theta_f,\theta_i)=\sigma_z  \left[S(-\theta_i,-\theta_f)\right]^\dag \sigma_z$, where $S$ without subscript indicates validity for both CCW and CW loops. Specially, for $\theta_i=0$ or $\pi$ and $\theta_f=\theta_i\pm2\pi$, property (iii) gives $S=\sigma_zS^\dag \sigma_z$, where $S$ denotes either $S_{\text{CCW}}(2\pi,0)$, $S_{\text{CCW}}(\pi,-\pi)$, $S_{\text{CW}}(0,2\pi)$ or $S_{\text{CW}}(-\pi,\pi)$.  This symmetry tells us that $S_{11}$ and $S_{22}$ should be real numbers while $S_{12}=-S_{21}^\ast$, so that $S_{21}/S_{12}=e^{i\phi}$, with $\phi$ some real angle.
Furthermore, the traceless property of $H(t)$ results in (iv) $\det[S(\theta_f,\theta_i)]=1$. 
Floquet theory \cite{sm} tells us that the quasienergies are just the eigenvalues of the time-averaged Hamiltonian and hence (v) $\text{Tr}[S(\theta_i\pm2\pi,\theta_i)]=2\cos(2\pi\sqrt{1-g_0^2}/\omega)$. 


The properties (i) to (v) depend only on the symmetry of the Hamiltonian and tell us important information about the transfer matrix. They could be used to check the correctness of any analytical solution or numerical simulations.  
For example, we find that numerical simulations using double precision floating numbers could give wrong results, breaking one or more properties \cite{sm}, indicating necessary high precision in obtaining the correct dynamics and an essential role played by noise. This observation makes it necessary to revisit the chiral state transfer and noise effect in a more systematical way.

{\it Exact Solution of the Transfer Matrix.---}
The encircling dynamics without noise can be solved exactly \cite{sm}: 
\begin{equation}\label{eq:exact:transferM}
  S(\theta_f,\theta_i)= \frac{\Gamma(p_1) e^{i(\theta_f-\theta_i) \sqrt{1-g_0^2}/\omega}\, p_1 \eta_i^{p_2-1}}{\Gamma(p_2) e^{(\eta_f+\eta_i)/2}}\, M_f \tilde{M}_i,
\end{equation}
where $\Gamma$ is the gamma function, $\eta_{f,i}=-2i\frac{\rho}{\omega} e^{i\theta_{f,i}}$, $M_f=M(\eta_f), \tilde{M}_i=\tilde{M}(\eta_i)$, and 
\begin{subequations}\label{eq:exact:MM}
\begin{align}
&M(\eta)=\left[
         \begin{array}{cc}
           F^{(0)}/(\omega p_1) & \quad U^{(0)}/(\omega p_1) \\
           -F^{(0)} -\frac{\eta}{p_2} F^{(1)}  & \quad - U^{(0)} +\eta  U^{(1)}  \\
         \end{array}
       \right], \label{eq:exact:M} \\
&\tilde{M}(\eta)=\left[
                    \begin{array}{cc}
                      -\omega p_1 \left(U^{(0)} -\eta  U^{(1)} \right) & \quad -U^{(0)} \\
                      \omega p_1 \left(F^{(0)} +\frac{\eta}{p_2} F^{(1)} \right) & \quad  F^{(0)} \\
                    \end{array}
                  \right], \label{eq:exact:Mtilde}
\end{align}
\end{subequations}
with $F$ and $U$ the confluent hypergeometric functions of the first and second kind, respectively, while $F^{(n)}$ and $U^{(n)}$ the abbreviations for $F(n+p_1, n+p_2,\eta)$ and $U(n+p_1, n+p_2,\eta)$, respectively \cite{PhysRevLett.118.093002}.

\begin{figure}[tbp]
  \includegraphics[width=0.48\textwidth]{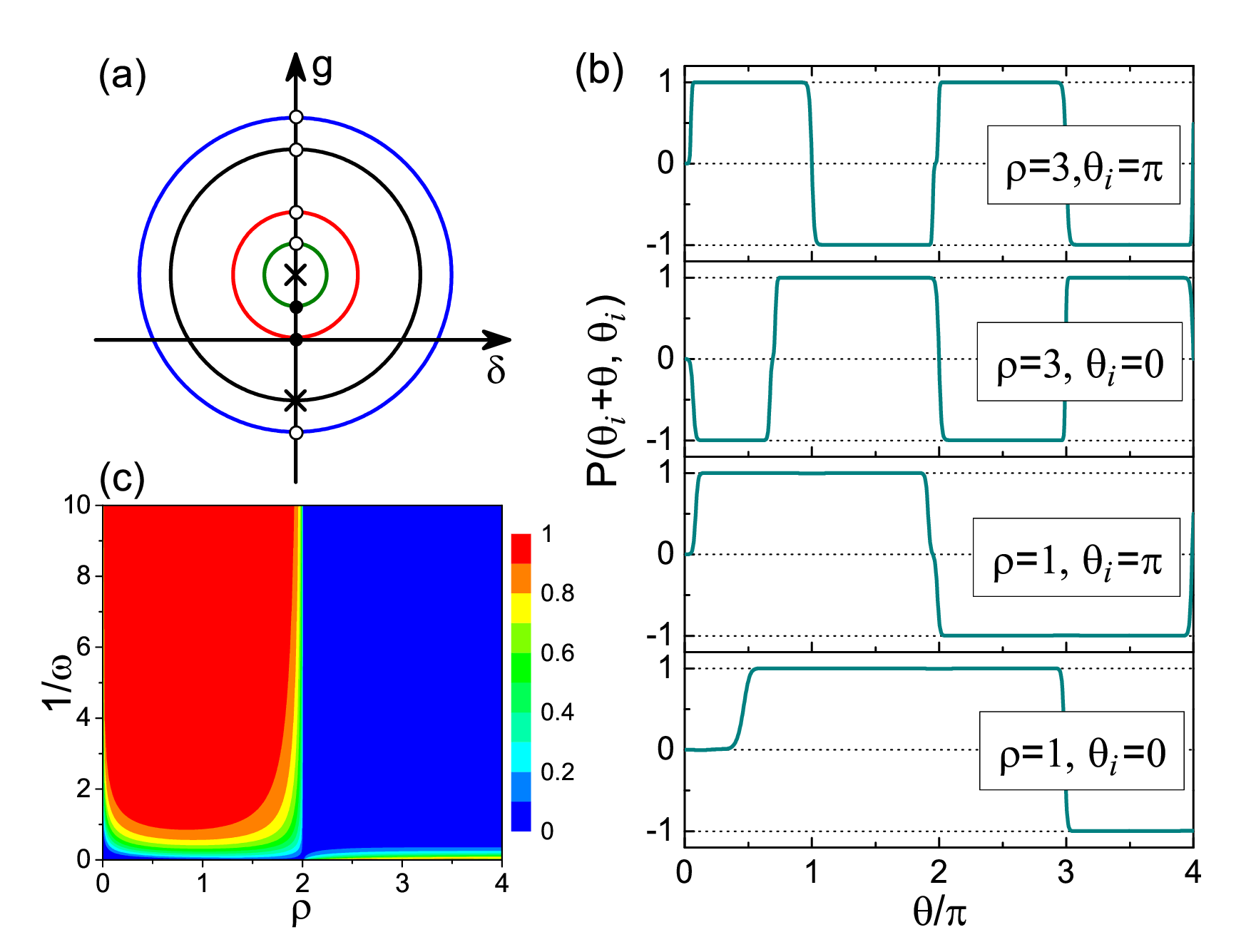} 
  \caption{(a) The loops given by Eq.(\ref{eq:loops}) with $g_0=\kappa=1$ in the complex $h_z$ plane, which may encircle one or two EPs (marked with $\times$). The solid and empty dots depict the starting/ending points in the PT-symmetric and broken phase, respectively. (b) The transition asymmetry $P(\theta_i+\theta,\theta_i)$ as a function of $\theta=\omega t$ for CCW loops, with $\omega=0.2$, $\rho=1,3$ and $\theta_i=0,\pi$. (c) The transition asymmetry after CCW one-cycle evolution starting from $\theta_i=0$, i.e., $P(2\pi,0)$, as a function of $\rho$ and $1/\omega$. } \label{fig:1} 
\end{figure}

For concreteness, we set $g_0=\kappa=1$, so that the loops encircle one (when $\rho<2$) or two (when $\rho>2$) EPs, as shown in Fig.\ref{fig:1}(a). One intriguing property of the non-Hermitian dynamics around EP is the breakdown of the adiabatic theorem, manifesting itself in a series of non-adiabatic transitions (NATs) during the evolution. Suppose the right(left) eigenvectors of $H(t)$ are $|R_{\pm}(t)\rangle$ ($|L_{\pm}(t)\rangle$), which satisfies the biorthogonal relations $\langle L_\alpha(t)|R_\beta(t)\rangle =\delta_{\alpha\beta}$. Then we can define the \emph{relative transition probability} from $|R_{\alpha}(0)\rangle$ to $|R_{\beta}(t)\rangle$ as
\begin{equation}\label{eq:transition1}
  P_{\beta\alpha}(\theta_f,\theta_i) \equiv \frac{\left|[S(\theta_f,\theta_i)]_{\beta,\alpha} \right|^2}{\left|[S(\theta_f,\theta_i)]_{+,\alpha} \right|^2 + \left|[S(\theta_f,\theta_i)]_{-,\alpha} \right|^2},
\end{equation}
where $[S(\theta_f,\theta_i)]_{\beta,\alpha}\equiv\langle L_\beta(t)|S(\theta_f,\theta_i)|R_{\alpha}(0)\rangle$. The quantity $P_{\beta\alpha}$ is symmetric for Hermitian two-level systems but asymmetric for non-Hermitian ones. The degree of asymmetry can be measured by $P(\theta_f,\theta_i)\equiv P_{+-}(\theta_f,\theta_i) -P_{-+}(\theta_f,\theta_i)$. When $P(\theta_f,\theta_i)\approx \pm1$, the corresponding eigenvector $|R_{\pm}(t)\rangle$ would dominate the state vector $|\psi(t)\rangle$. On the other hand, if $P(\theta_f,\theta_i)\approx 0$, no eigenvector can dominate the final state vector and interference effect could happen as in the Hermitian dynamics. Fig.\ref{fig:1}(b) plots this transition asymmetry as a function of the evolution time for CCW loops with the parameters $\omega=0.2$, $\rho=1,3$ and $\theta_i=0,\pi$, showing a series of sharp NATs. However, at the end of one-cycle evolution, i.e., at $\theta=2\pi$, the transition asymmetry $P\approx 1$ for $\rho=1,\theta_i=0$ but deviates from $\pm1$ and lies near an NAT in the other three cases. Fig.\ref{fig:1}(c) plots such one-cycle result $P(2\pi,0)$ in a larger $\rho$-$(1/\omega)$ space, clearly showing two different regions in the adiabatic limit: $P\approx 1$ for $\rho<2$ while $P\approx 0$ for $\rho>2$. We can expect chiral dynamics in the $P\approx 1$ region and interference behaviors in the $P\approx 0$ region, with the latter not yet discussed. 

To verify the expected interference behavior, let's analyze the $\rho\rightarrow\infty$ limit by using the asymptotic expansion of $F$ and $U$, which results in
\begin{subequations}\label{eq:asymptitic}
  \begin{align} 
  S_{12} &\rightarrow -\frac{2\pi i}{\omega}  e^{-\eta_0} \left(\frac{(-\eta_0)^{-i/\omega}}{\Gamma(1-i/\omega)}\right)^2 ,\\
  S_{21} &\rightarrow -\frac{2\pi i}{\omega}  e^{\eta_0} \left(\frac{(\eta_0)^{i/\omega}}{\Gamma(1+i/\omega)}\right)^2,
  \end{align}
\end{subequations}
for $\theta_0=0$ and hence $\eta_0=-2i\rho/\omega$. So that $S_{21}/S_{12}=e^{i\phi}$, with $\phi\approx 4(1+\rho+\log(\rho))/\omega-\pi$ \cite{sm}.  When $\phi=0 \;(\text{mod}\; 2\pi)$, $S_{12}=S_{21}$ and hence $S_{\text{CCW}}=S_{\text{CW}}$ (no chiral dynamics) according to the symmetry property (ii) given in the last section. While when $\phi=\pi \;(\text{mod}\; 2\pi)$, $S_{12}=-S_{21}$ and hence chiral dynamics happens. The angle $\phi$ varies with $\rho$ and $\omega$, so that the dynamics varies between chiral and non-chiral behaviors. This chirality oscillation is just the expected interference and has not been discovered before. We would study it quantitatively in the following.

\begin{figure}
  \centering
  \includegraphics[width=0.48\textwidth]{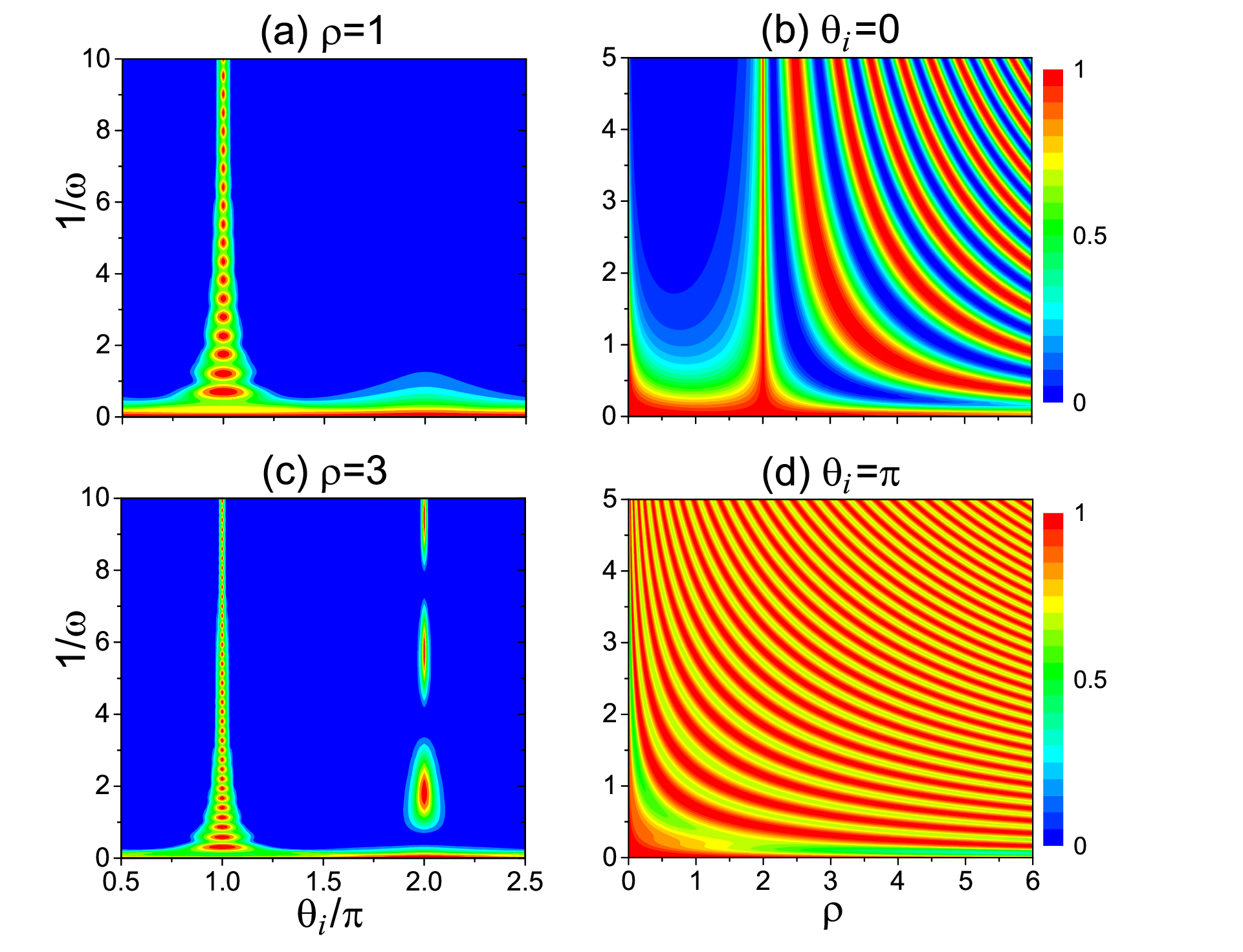}
  \caption{The non-chirality degree $\chi_c$ in the parameter space: the $(1/\omega)$-$\theta_i$ space for (a) $\rho=1$ and (c) $\rho=3$, and the $(1/\omega)$-$\rho$ space for (b) $\theta_i=0$ and (d) $\theta_i=\pi$. The results are evaluated from the exact solution Eq.(\ref{eq:exact:transferM}). } \label{fig:2}
\end{figure}
For this purpose, a quantitative measure of the chirality should be defined. We propose the following ``non-chirality degree'' $\chi_c$. In one-cycle evolution, the adiabatic states $|R_{\alpha}(0)\rangle$ evolve to $|\psi_{\text{CCW}}^{\alpha}\rangle= S_{\text{CCW}}(2\pi,0)|R_{\alpha}(0)\rangle$ and $|\psi_{\text{CW}}^{\alpha}\rangle= S_{\text{CW}}(-2\pi,0)|R_{\alpha}(0)\rangle$ in the two different directions. In the adiabatic representation, the end states could be decomposed as
$|\psi_{\text{L}}^{\alpha}\rangle =p_{\text{L}}^\alpha |R_{+}(0)\rangle +q_{\text{L}}^\alpha |R_{-}(0)\rangle$, where L denotes CCW or CW. Then the chirality can be measured in terms of the inner product between $(p_{\text{CCW}}^\alpha, q_{\text{CCW}}^\alpha)$ and $(p_{\text{CW}}^\alpha, q_{\text{CW}}^\alpha)$ with normalization:
\begin{equation}\label{eq:chirality}
  \chi_c^\alpha \equiv \frac{\left| p_{\text{CCW}}^{\alpha\ast} p_{\text{CW}}^\alpha + q_{\text{CCW}}^{\alpha\ast} q_{\text{CW}}^\alpha \right|^2 }{\left(|p_{\text{CCW}}^{\alpha}|^2+|q_{\text{CCW}}^{\alpha}|^2 \right) \left(|p_{\text{CW}}^{\alpha}|^2+|q_{\text{CW}}^{\alpha}|^2 \right)}.
\end{equation}
Obviously $\chi_c^{\alpha}=0(1)$ indicates complete chiral (non-chiral) state conversion starting from the adiabatic state $|R_\alpha(0)\rangle$. However, in general cases $\chi_c^{\alpha}$ may take all possible real values in the interval $[0,1]$ which measures the `non-chirality degree of state conversion', or simply \emph{non-chirality degree}. Furthermore, we can also remove the dependence on the initial state by defining an average $\chi_c\equiv \frac{1}{2}(\chi_c^{+}+\chi_c^{-})$. This quantity makes the chirality computable and hence enables us to study the state conversion behavior systematically and quantitatively in the whole parameter space.
In Fig.\ref{fig:2} we plot numerical results of $\chi_c$ evaluated from the exact solution given in Eq.(\ref{eq:exact:transferM}). We indeed observe the expected oscillation behavior, and such oscillations only exist when the starting point lies near the broken phase. 

One remark is needed here. To obtain the $\chi_c$ oscillation in the small-$\omega$ region in Fig.\ref{fig:2}, very high precision should be used in numerical evaluation of the exact solutions. Double precision may lead to completely wrong results, breaking one or more symmetry properties of the transfer matrix. This explains why such oscillation has not been discovered in previous theoretical studies. However, an essential question arises immediately: which result, the high-precision one or the double-precision one, should count for experimental measurements? The answer lies in the noise effect which will be demonstrated in the following.

\begin{figure}
  \centering
  \includegraphics[width=0.48\textwidth]{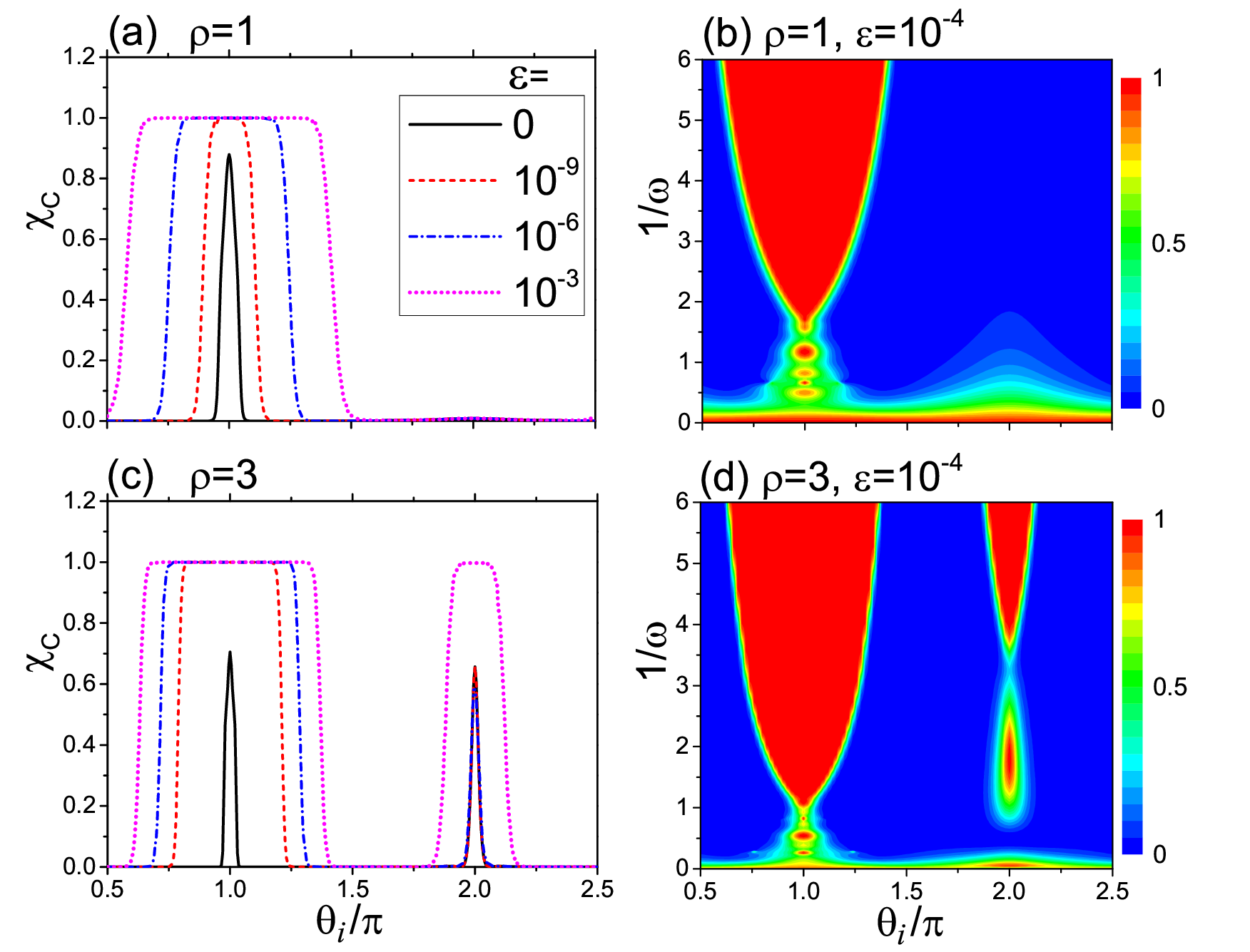}
  \caption{The non-chirality degree $\chi_c$ in the noisy dynamics under the Hamiltonian given by Eq.(\ref{eq:Hamiltonian:noisy}). (a) As a function of $\theta_i$ for $\rho=1, \omega=0.2$ and $\varepsilon=0, 10^{-9}, 10^{-6}, 10^{-3}$. (b) In the $(1/\omega)$-$\theta_i$ plane for fixed $\rho=1$ and $\varepsilon=10^{-4}$. Compare this noisy dynamics results with the corresponding results without noise shown in Fig.\ref{fig:2}(a). (c) The same as (a) except for $\rho=3$. (d) The same as (b) except for $\rho=3$.  Compare this with the corresponding results without noise shown in Fig.\ref{fig:2}(c).} \label{fig:3}
\end{figure}
{\it Speed-Noise Competition on the Chirality.---}
Noise effects are unavoidable in practical experiments and even in numerical simulations using floating numbers.
To be concrete, here we consider the dynamics under a perturbed Hamiltonian 
\begin{equation}\label{eq:Hamiltonian:noisy}
  \tilde{H}(t)=H(t)+\varepsilon \xi(t)\sigma_z, 
\end{equation}
where $\xi(t)$ is a Gaussian white noise with $\langle\xi(t)\rangle=0$ and $\langle\xi(t)\xi(t')\rangle=\delta(t-t')$. The parameter $\varepsilon$ measures the strength of the noise term. Of course, a more general noise perturbation may be written as
$\sum_{j=x,y,z}\varepsilon_j \xi_j(t)\sigma_j$. However, the $\sigma_z$ term itself is enough to reveal the qualitative behaviors and hence we consider only this term for simplicity. 

Now the dynamics could not be solved exactly and numerical methods should be applied. Here we choose the fourth-order Runge-Kutta method and use large enough significant digits to remove additional noise effects brought by roundoff errors. The resulted $\chi_c$ is plotted in Fig.\ref{fig:3}. For $\rho=1$ or $3$, the non-chirality degree $\chi_c$ is large only in a finite region near the broken phase, and this non-chiral region is enlarged as $\varepsilon$ increases [Fig.\ref{fig:3}(a) and (c)]. When $\varepsilon=10^{-4}$, the chirality oscillation near $\theta_i=\pi$ in the absence of noise [see Fig.\ref{fig:2}(a)] is erased in the small-$\omega$ region, replaced by non-chiral dynamics ($\chi_c\approx1$) [Fig.\ref{fig:3}(b)]. As $\rho>2$, the chirality oscillations near both $\theta_i=0$ and $\pi$ [Fig.\ref{fig:2}(c)] are all erased by a finite noise in the small-$\omega$ region [Fig.\ref{fig:3}(d)]. So the non-chiral dynamics starting from the broken phase is a result of noise, but not an intrinsic property as taken in some previous studies \cite{PhysRevX.8.021066}. The chirality is strongly sensitive to noise when the starting point lies in the broken phase but not sensitive to noise when the starting point lies near the symmetric phase ($\theta_i=0,\rho<2$). This observation is consistent with a recent experiment \cite{CP8_91_2025}.

\begin{figure}
  \centering
  \includegraphics[width=0.48\textwidth]{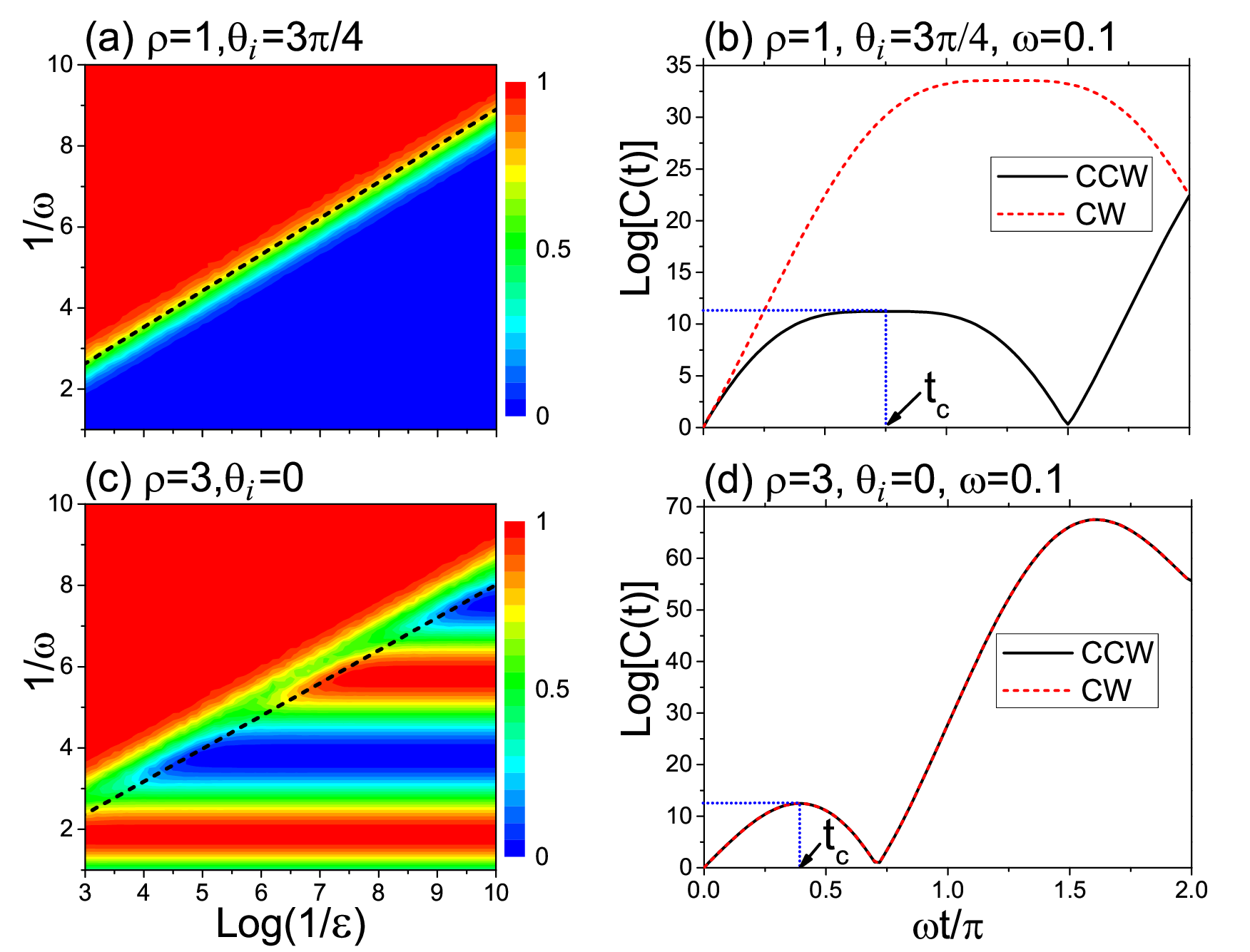}
  \caption{The non-chirality degree $\chi_c$ in the $\log(1/\varepsilon)$-$(1/\omega)$ plane [(a) and (c)] and the logarithm of the condition number $C(t)$ defined in Eq.(\ref{eq:condition}) [(b) and (d)]. The maximums of $C(t)$ at some critical time $t_c$ are depicted by blue dotted line in (b) and (d). The black dashed lines in (a) and (c) correspond to the critical lines determined by the condition $\varepsilon_c\,C(t_c)=1$.  } \label{fig:4}
\end{figure}

An important lesson learned from Fig.\ref{fig:3} is that noise could change the chirality of the state conversion process completely at some critical strength $\varepsilon_c$, which depends on $\rho, \theta_i$ and $\omega$. The dependence on the two geometric parameters $\rho$ and $\theta_i$ shows no clear universal behavior. On the other hand, the dependence on the dynamic parameter $\omega$ is quite clear: {the speed $\omega$ and the noise strength $\varepsilon$ compete with each other in determining $\chi_c$}. In Fig.\ref{fig:4}(a) and (c) we plot $\chi_c$ as a function of $\omega$ and $\varepsilon$, showing a critical line $\log(1/\varepsilon_c)\sim (1/\omega)$ which separates the parameter space into two regions with completely different dynamical behaviors. It's useful to define two limits to understand the above competition: (i) the \emph{noisy limit} with $\omega\rightarrow 0$ for fixed small but finite $\varepsilon$, and (ii) the \emph{clean limit} with $\varepsilon\rightarrow0$ for fixed small $\omega$. The critical line just separates the two limits which should be identified carefully in practical experiments and numerical simulations to remove possible misconceptions. 

We should note that although the noisy limit is defined in the $\omega\rightarrow0$ limit, it's quite different from the \emph{adiabatic limit} for Hermitian systems in the sense that the adiabatic theorem does not apply in such non-Hermitian dynamics, due to the exponential instability with respect to noise. 
We can understand this instability and hence the $\omega$-$\varepsilon$ competition in terms of perturbation theory. Up to first order in $\varepsilon$, the transfer matrix $\tilde{S}=\tilde{S}(\theta_i+2\pi,\theta_i)$ under $\tilde{H}(t)$ reads 
\begin{equation}\label{eq:perturb}
 \tilde{S} \approx S -i\varepsilon \int_{0}^{T} dt_1\; \xi(t_1) S(\theta_f,\theta_1)\sigma_z S(\theta_1,\theta_i),
\end{equation}
where $S=S(\theta_i+2\pi,\theta_i), \theta_1=\theta_i+\omega t_1$ and $T=2\pi/\omega$. So the noise term disrupts the perfect evolution at all possible intermediate times $t_1$. The relative change of the final result $||\tilde{S}-S||/||S||$ due to the noise at $t_1$ is determined by the {condition number} of $S(\theta_f,\theta_1)$. Let's define
\begin{equation}\label{eq:condition}
  C(t)\equiv\text{Cond}\left[S(\theta_i+2\pi,\theta_i+2\pi-\omega t)\right],
\end{equation}
where $\text{Cond}[A]=\sqrt{\lambda_{max}(A^\dag A)/\lambda_{min}(A^\dag A)}$ denotes the condition number of matrix $A$, with $\lambda_{max/min}(A^\dag A)$ the maximum/minimum eigenvalues of $A^\dag A$. Then if at some intermediate time $t_r$ the condition
$\varepsilon\,C(t_r)=1$ is satisfied, the result would be significantly changed by the noise near $T-t_r$, and the evolution before $T-t_r$ would be irrelevant so that the property of $\tilde{S}$ is determined solely by the evolution from $T-t_r$ to $T$. 
As $\varepsilon$ varies, the number of $t_r$ that satisfies $\varepsilon\,C(t_r)=1$ may change at some critical values of $\varepsilon_c$, and this change may (but not necessarily) lead to a qualitative change of the non-chirality degree $\chi_c$. For example, for $\rho=1, \theta_i=3\pi/4$ and $\omega=0.1$, the condition number $C(t)$ is not monotonic function of $t$ in both CCW and CW encircling loops [Fig.\ref{fig:4}(b)], and hence the number of $t_r$ changes when $1/\varepsilon$ equals to a local maximum of $C(t)$. So the transition line in the $\omega$-$\varepsilon$ plane could be given implicitly by the equation $\varepsilon_c C(t_c)=1$ with $t_c$ the first maximum point of $C(t)$ [Fig.\ref{fig:4}(b) and (d)]. Such determined critical lines are also plotted in Fig.4(a) and (c), being in good agreement with the exact numerical simulations.

Furthermore, we note that the condition number is mainly determined by the imaginary part of the dynamical phase, 
$C(t_c)\sim \exp\left[\frac{1}{\omega}\left|\int_{T-t_c}^{T} d\tau \text{Im}\lambda(\tau) \right|\right]$, with $\lambda(\tau)$ the eigenvalue of $H(\tau)$. Therefore the critical condition $\varepsilon_c C(t_c)=1$ reads $\varepsilon_c e^{{const.}/\omega}=1$, and hence the critical boundary in the $\omega$-$\varepsilon$ competition  satisfies the scaling relation $\log(1/\varepsilon_c)\sim (1/\omega)$ [Fig.\ref{fig:4}(a) and (c)].

Finally we point out that the speed-noise competition should be observable in experiments by controlling the encircling speed, the noise level or the starting point. For example, in coupled waveguide systems\cite{nature537-76,CP2-63, PhysRevLett.124.153903,nc13-2123}, the effective speed can be changed by changing the waveguide length $L$. So that one should observe a transition between chiral and non-chiral dynamics as $L$ crosses some critical value, which depends on the starting point, the effective loop radius, and the noise level. On the other hand, the speed of encircling could be varied by controlling the number of round trips in fiber-based photonic emulator, and the resulted chirality transition has already been observable in Fig.1(b) in the paper by Nasari \emph{et al.} \cite{nature605-256_2022}. We suggest that they could also change the noise level by controlling the bandpass filter and hence observe the speed-noise competition directly.  

{\it Conclusions and Discussions.---}
In conclusion, we have analyzed the non-Hermitian dynamics along loops encircling one or two EPs, focusing on the chiral state conversion behavior in one period. From symmetry and exact solution of the transfer matrix we demonstrate that in the absence of noise the chirality show oscillations when the starting point lies in the broken phase, which are very sensitive to noise when the speed is small. Noise and speed competes with each other in determining the chiral behavior, resulting in two different limits, namely the noisy limit and the clean limit, and a critical boundary between them. We also provide an explanation of this competition and the scaling relation of the critical boundary in terms of first-order perturbation theory. Our results make clear the significant role played by noise in non-Hermitian dynamics, and suggest that it should be taken into consideration in future theoretical and experimental studies carefully to avoid mistaking noise effect for intrinsic properties.   
A recent experiment \cite{PhysRevLett.134.146602} demonstrated a universal scaling of the chirality with respect to the encircling time around a Liouvillian exceptional point, which takes the same form as the one of the critical boundary discovered here. This suggests that the speed-noise (or dissipation) competition and corresponding scaling may be ubiquitous behaviors in both non-Hermitian and dissipative dynamics. 


\begin{acknowledgments}
This work has been partly supported by the Fundamental Research Funds for the Provincial Universities of Zhejiang,  Grant No.2021J014.
\end{acknowledgments}

\bibliography{EP-dynamics-bib}

@article{PhysRevE.61.929,
  title = {Repulsion of resonance states and exceptional points},
  author = {Heiss, W. D.},
  journal = {Phys. Rev. E},
  volume = {61},
  issue = {1},
  pages = {929--932},
  numpages = {0},
  year = {2000},
  month = {Jan},
  publisher = {American Physical Society},
  doi = {10.1103/PhysRevE.61.929},
  url = {https://link.aps.org/doi/10.1103/PhysRevE.61.929}
}

@article{Graefe_2008,
    author = {Graefe, E. M. and G\"unther, U. and Korsch, H. J. and Niederle, A. E.},
	title = {A non-Hermitian $PT$-symmetric Bose-Hubbard model: eigenvalue rings from unfolding higher-order exceptional points},
	journal = {J. Phys. A},
	year = {2008},
	month = {may},
	publisher = {{IOP} Publishing},
	volume = {41},
	number = {25},
	pages = {255206},
    doi = {10.1088/1751-8113/41/25/255206},
}

@article{GARRISON1988177,
title = "Complex geometrical phases for dissipative systems",
author = "Garrison, J.C. and Wright, E. M.",
journal = "Phys. Lett. A",
volume = "128",
number = "3",
pages = "177",
year = "1988",
issn = "0375-9601",
doi = "https://doi.org/10.1016/0375-9601(88)90905-X",
url = "http://www.sciencedirect.com/science/article/pii/037596018890905X",
}

@ARTICLE{Berry2004,
  author =       {Berry, M. V. },
  title =        {Physics of nonhermitian degeneracies},
  journal =      { Czech. J. Phys.},
  year =         {2004},
  volume =       {54},
  number =       {},
  pages =        {1039},
  source =       {https://doi.org/10.1023/B:CJOP.0000044002.05657.04},
}

@article{Keck_2003,
    author = {Keck, F. and Korsch, H. J. and Mossmann, S.},
	title = {Unfolding a diabolic point: a generalized crossing scenario},
	journal = {J. Phys. A},
	year = 2003,
	month = {feb},
	publisher = {{IOP} Publishing},
	volume = {36},
	number = {8},
	pages = {2125--2137},
	doi = {10.1088/0305-4470/36/8/310},
}

@article{PhysRevLett.103.093902,
  title = {Observation of $\mathcal{P}\mathcal{T}$-Symmetry Breaking in Complex Optical Potentials},
  author = {Guo, A. and Salamo, G. J. and Duchesne, D. and Morandotti, R. and Volatier-Ravat, M.
            and Aimez, V. and Siviloglou, G. A. and Christodoulides, D. N.},
  journal = {Phys. Rev. Lett.},
  volume = {103},
  issue = {9},
  pages = {093902},
  numpages = {4},
  year = {2009},
  month = {Aug},
  publisher = {American Physical Society},
  doi = {10.1103/PhysRevLett.103.093902},
  url = {https://link.aps.org/doi/10.1103/PhysRevLett.103.093902}
}

@article {Feng972,
	author = {Feng, Liang and Wong, Zi Jing and Ma, Ren-Min and Wang, Yuan and Zhang, Xiang},
	title = {Single-mode laser by parity-time symmetry breaking},
	volume = {346},
    journal = {Science},
	number = {6212},
	pages = {972--975},
	year = {2014},
	doi = {10.1126/science.1258479},
	publisher = {American Association for the Advancement of Science},
	issn = {0036-8075},
	URL = {https://science.sciencemag.org/content/346/6212/972},
}

@article {Hodaei975,
	author = {Hodaei, Hossein and Miri, Mohammad-Ali and Heinrich, Matthias and Christodoulides,
              Demetrios N. and Khajavikhan, Mercedeh},
	title = {Parity-time{\textendash}symmetric microring lasers},
    journal = {Science},
	volume = {346},
	number = {6212},
	pages = {975--978},
	year = {2014},
	doi = {10.1126/science.1258480},
	publisher = {American Association for the Advancement of Science},
    issn = {0036-8075},
	URL = {https://science.sciencemag.org/content/346/6212/975},
}

@article{PhysRevLett.106.213901,
  title = {Unidirectional Invisibility Induced by $\mathcal{P}\mathcal{T}$-Symmetric Periodic Structures},
  author = {Lin, Zin and Ramezani, Hamidreza and Eichelkraut, Toni and Kottos, Tsampikos and Cao, Hui 
           and Christodoulides, Demetrios N.},
  journal = {Phys. Rev. Lett.},
  volume = {106},
  issue = {21},
  pages = {213901},
  numpages = {4},
  year = {2011},
  month = {May},
  publisher = {American Physical Society},
  doi = {10.1103/PhysRevLett.106.213901},
  url = {https://link.aps.org/doi/10.1103/PhysRevLett.106.213901}
}

@ARTICLE{Feng2013,
  author =       {Feng, Liang and Xu, Ye-Long and William, S. Fegadolli and Lu, Ming-Hui and Jose, E. B.
                Oliveira and  Vilson, R. Almeida and  Chen, Yan-Feng and Axel, Scherer},
  title =        {Experimental demonstration of a unidirectional reflectionless parity-time
                 metamaterial at optical frequencies},
  journal =      {Nat. Mater.},
  year =         {2013},
  volume =       {12},
  number =       {},
  pages =        {108 --- 113},
  month =        {},
  note =         {},
  source =       {},
}

@article{Uzdin_2011,
doi = {10.1088/1751-8113/44/43/435302},
url = {https://doi.org/10.1088/1751-8113/44/43/435302},
year = {2011},
month = {oct},
publisher = {IOP Publishing},
volume = {44},
number = {43},
pages = {435302},
author = {Uzdin, Raam and Mailybaev, Alexei and Moiseyev, Nimrod},
title = {On the observability and asymmetry of adiabatic state flips generated by exceptional points},
journal = {J. Phys. A: Math. Theor.},
}

@article{PhysRevA.88.033842,
  title = {Breakdown of adiabatic transfer of light in waveguides in the presence of absorption},
  author = {Graefe, Eva-Maria and Mailybaev, Alexei A. and Moiseyev, Nimrod},
  journal = {Phys. Rev. A},
  volume = {88},
  issue = {3},
  pages = {033842},
  numpages = {7},
  year = {2013},
  month = {Sep},
  publisher = {American Physical Society},
  doi = {10.1103/PhysRevA.88.033842},
  url = {https://link.aps.org/doi/10.1103/PhysRevA.88.033842}
}

@article{PhysRevA.88.010102,
  title = {Time-asymmetric quantum-state-exchange mechanism},
  author = {Gilary, Ido and Mailybaev, Alexei A. and Moiseyev, Nimrod},
  journal = {Phys. Rev. A},
  volume = {88},
  issue = {1},
  pages = {010102},
  numpages = {5},
  year = {2013},
  month = {Jul},
  publisher = {American Physical Society},
  doi = {10.1103/PhysRevA.88.010102},
  url = {https://link.aps.org/doi/10.1103/PhysRevA.88.010102}
}

@ARTICLE{nature537-76,
  author =       {Doppler, J. and Mailybaev, A. A. and B\"ohm, J. and Kuhl, U. and Girschik,  A. and
                 Libisch, F. and Milburn, T. J. and Rabl, P. and  Moiseyev, N. and Rotter, S.},
  title =        {Dynamically encircling an exceptional point for asymmetric mode switching},
  journal =      {Nature},
  year =         {2016},
  volume =       {537},
  number =       {},
  pages =        {76},
  month =        {},
  note =         {},
  source =       {https://doi.org/10.1038/nature18605},
}

@ARTICLE{nature537-80,
  author =       {Xu, H. and Mason, D. and Luyao Jiang and Harris, J. G. E.},
  title =        {Topological energy transfer in an optomechanical system with exceptional points},
  journal =      {Nature},
  year =         {2016},
  volume =       {537},
  number =       {},
  pages =        {80},
  month =        {},
  note =         {},
  source =       {https://doi.org/10.1038/nature18604},
}

@article{PhysRevLett.118.093002,
  title = {Dynamically Encircling Exceptional Points: Exact Evolution and Polarization State Conversion},
  author = {Hassan, Absar U. and Zhen, Bo and Solja\ifmmode \check{c}\else \v{c}\fi{}i\ifmmode \acute{c}\else \'{c}\fi{}, Marin and Khajavikhan, Mercedeh and Christodoulides, Demetrios N.},
  journal = {Phys. Rev. Lett.},
  volume = {118},
  issue = {9},
  pages = {093002},
  numpages = {6},
  year = {2017},
  month = {Mar},
  publisher = {American Physical Society},
  doi = {10.1103/PhysRevLett.118.093002},
  url = {https://link.aps.org/doi/10.1103/PhysRevLett.118.093002}
}

@ARTICLE{nature562-86,
  author =       {Yoon, J. W. and Choi, Y. and Hahn, C. and Kim, G. and Ho Song, S. and
              Yang, K. -Y. and Yub Lee, J. and Kim, Y. and Lee, C. S. and Shin, J. K. and Lee, H. -S. and
              Berini, P.},
  title =        {Time-asymmetric loop around an exceptional point over the full optical communications band },
  journal =      {Nature},
  year =         {2018},
  volume =       {562},
  number =       {},
  pages =        {86},
  month =        {},
  note =         {},
  source =       {},
}

@article{PhysRevX.8.021066,
  title = {Dynamically Encircling Exceptional Points: In situ Control of Encircling Loops and the Role of the Starting Point},
  author = {Zhang, Xu-Lin and Wang, Shubo and Hou, Bo and Chan, C. T.},
  journal = {Phys. Rev. X},
  volume = {8},
  issue = {2},
  pages = {021066},
  numpages = {18},
  year = {2018},
  month = {Jun},
  publisher = {American Physical Society},
  doi = {10.1103/PhysRevX.8.021066},
  url = {https://link.aps.org/doi/10.1103/PhysRevX.8.021066}
}

@ARTICLE{LSA8-88,
  author =       {Zhang, X. L. and Jiang, T. and Chan, C.T. },
  title =        {Dynamically encircling an exceptional point
             in anti-parity-time symmetric systems: asymmetric mode switching for symmetry-broken modes.},
  journal =      {Light: Sci. Appl.},
  year =         {2020},
  volume =       {8},
  number =       {},
  pages =        {88},
  month =        {},
  note =         {},
  source =       {},
  doi =          {10.1038/s41377-019-0200-8},
}

@article{LSA13-65,
	title = {Chiral transmission by an open evolution trajectory in a non-{Hermitian} system},
	volume = {13},
	issn = {2047-7538},
	url = {https://doi.org/10.1038/s41377-024-01409-1},
	doi = {10.1038/s41377-024-01409-1},
    number = {1},
	journal = {Light: Sci. Appl.},
	author = {Shu, Xiaoqian and Zhong, Qi and Hong, Kai and You, Oubo and Wang, Jian and Hu, 
              Guangwei and Al\'u, Andrea and Zhang, Shuang and Christodoulides, Demetrios N. and Chen, Lin},
	month = {mar},
	year = {2024},
	pages = {65},
}

@article{LSA14-77,
	title = {Polarization-controlled chiral transport },
	volume = {14},
	issn = {2047-7538},
	url = {https://doi.org/10.1038/s41377-025-01762-9},
	doi = {10.1038/s41377-025-01762-9},
    number = {1},
	journal = {Light: Sci. Appl.},
	author = {Zhu, Hang and Wang, Jian and Al\'u, Andrea and Chen, Lin},
	month = {},
	year = {2025},
	pages = {77},
}

@ARTICLE{CP2-63,
  author =       {Zhang, X. L. and Chan, C.T. },
  title =        {Dynamically encircling exceptional points in a three-mode waveguide system},
  journal =      {Commun. Phys.},
  year =         {2019},
  volume =       {2},
  number =       {},
  pages =        {63},
  month =        {},
  note =         {},
  source =       {},
}

@article{PhysRevLett.124.153903,
  title = {Efficient Mode Transfer on a Compact Silicon Chip by Encircling Moving Exceptional Points},
  author = {Liu, Qingjie and Li, Shuyi and Wang, Bing and Ke, Shaolin and Qin, Chengzhi and Wang, Kai and Liu, Weiwei and Gao, Dingshan and Berini, Pierre and Lu, Peixiang},
  journal = {Phys. Rev. Lett.},
  volume = {124},
  issue = {15},
  pages = {153903},
  numpages = {6},
  year = {2020},
  month = {Apr},
  publisher = {American Physical Society},
  doi = {10.1103/PhysRevLett.124.153903},
  url = {https://link.aps.org/doi/10.1103/PhysRevLett.124.153903}
}

@article{PhysRevA.99.063831,
  title = {Distinct outcomes by dynamically encircling an exceptional point along homotopic loops},
  author = {Zhang, Xu-Lin and Song, Jun-Feng and Chan, C. T. and Sun, Hong-Bo},
  journal = {Phys. Rev. A},
  volume = {99},
  issue = {6},
  pages = {063831},
  numpages = {9},
  year = {2019},
  month = {Jun},
  publisher = {American Physical Society},
  doi = {10.1103/PhysRevA.99.063831},
  url = {https://link.aps.org/doi/10.1103/PhysRevA.99.063831}
}

@article{PhysRevLett.125.187403,
  title = {Hamiltonian Hopping for Efficient Chiral Mode Switching in Encircling Exceptional Points},
  author = {Li, Aodong and Dong, Jianji and Wang, Jian and Cheng, Ziwei and Ho, John S. and Zhang, Dawei and Wen, Jing and Zhang, Xu-Lin and Chan, C. T. and Al\`u, Andrea and Qiu, Cheng-Wei and Chen, Lin},
  journal = {Phys. Rev. Lett.},
  volume = {125},
  issue = {18},
  pages = {187403},
  numpages = {6},
  year = {2020},
  month = {Oct},
  publisher = {American Physical Society},
  doi = {10.1103/PhysRevLett.125.187403},
  url = {https://link.aps.org/doi/10.1103/PhysRevLett.125.187403}
}

@article{PhysRevLett.124.133905,
  title = {Robust Scattered Fields from Adiabatically Driven Targets around Exceptional Points},
  author = {Fern\'andez-Alc\'azar, Lucas J. and Li, Huanan and Ellis, Fred and Al\'u,
            Andrea and Kottos, Tsampikos},
  journal = {Phys. Rev. Lett.},
  volume = {124},
  issue = {13},
  pages = {133905},
  numpages = {6},
  year = {2020},
  month = {Apr},
  publisher = {American Physical Society},
  doi = {10.1103/PhysRevLett.124.133905},
  url = {https://link.aps.org/doi/10.1103/PhysRevLett.124.133905}
}

@article{Heiss_2012,
	doi = {10.1088/1751-8113/45/44/444016},
	url = {https://doi.org/10.1088%2F1751-8113%2F45%2F44%2F444016},
	year = {2012},
	month = {oct},
	publisher = {{IOP} Publishing},
	volume = {45},
	number = {44},
	pages = {444016},
	author = {Heiss, W. D. },
	title = {The physics of exceptional points},
	journal = {J. Phys. A}
}

@article{PhysRevA.72.014104,
  title = {Geometric phase around exceptional points},
  author = {Mailybaev, Alexei A. and Kirillov, Oleg N. and Seyranian, Alexander P.},
  journal = {Phys. Rev. A},
  volume = {72},
  issue = {1},
  pages = {014104},
  numpages = {4},
  year = {2005},
  month = {Jul},
  publisher = {American Physical Society},
  doi = {10.1103/PhysRevA.72.014104},
  url = {https://link.aps.org/doi/10.1103/PhysRevA.72.014104}
}

@article{PhysRevLett.86.787,
  title = {Experimental Observation of the Topological Structure of Exceptional Points},
  author = {Dembowski, C. and Gr\"af, H.-D. and Harney, H. L. and Heine, A. and Heiss, W. D.
            and Rehfeld, H. and Richter, A.},
  journal = {Phys. Rev. Lett.},
  volume = {86},
  issue = {5},
  pages = {787--790},
  numpages = {0},
  year = {2001},
  month = {Jan},
  publisher = {American Physical Society},
  doi = {10.1103/PhysRevLett.86.787},
  url = {https://link.aps.org/doi/10.1103/PhysRevLett.86.787}
}

@ARTICLE{nature526-554,
  author =       {Gao, T. and Estrecho, E. and Bliokh, K. Y. and Liew, T. C. H. and Fraser, M. D. and
           Brodbeck, S. and Kamp, M. and Schneider, C. and H\"ofling, S. and Yamamoto, Y. and Nori, F. and
            Kivshar, Y. S. and Truscott, A. G. and Dall, R. G. and Ostrovskaya, E. A.},
  title =        {Observation of non-Hermitian degeneracies in a chaotic exciton-polariton billiard},
  journal =      {Nature},
  year =         {2015},
  volume =       {526},
  number =       {},
  pages =        {554},
  month =        {},
  note =         {},
  source =       {},
}

@article{PhysRevLett.99.173003,
  title = {Exceptional Points in Atomic Spectra},
  author = {Cartarius, Holger and Main, J\"org and Wunner, G\"unter},
  journal = {Phys. Rev. Lett.},
  volume = {99},
  issue = {17},
  pages = {173003},
  numpages = {4},
  year = {2007},
  month = {Oct},
  publisher = {American Physical Society},
  doi = {10.1103/PhysRevLett.99.173003},
  url = {https://link.aps.org/doi/10.1103/PhysRevLett.99.173003}
}

@article{PhysRevLett.103.123003,
  title = {Resonance Coalescence in Molecular Photodissociation},
  author = {Lefebvre, R. and Atabek, O. and \ifmmode \check{S}\else \v{S}\fi{}indelka, M. and Moiseyev, N.},
  journal = {Phys. Rev. Lett.},
  volume = {103},
  issue = {12},
  pages = {123003},
  numpages = {4},
  year = {2009},
  month = {Sep},
  publisher = {American Physical Society},
  doi = {10.1103/PhysRevLett.103.123003},
  url = {https://link.aps.org/doi/10.1103/PhysRevLett.103.123003}
}

@article{PhysRevLett.118.040401,
  title = {Edge Modes, Degeneracies, and Topological Numbers in Non-Hermitian Systems},
  author = {Leykam, Daniel and Bliokh, Konstantin Y. and Huang, Chunli and Chong, Y. D. and Nori, Franco},
  journal = {Phys. Rev. Lett.},
  volume = {118},
  issue = {4},
  pages = {040401},
  numpages = {6},
  year = {2017},
  month = {Jan},
  publisher = {American Physical Society},
  doi = {10.1103/PhysRevLett.118.040401},
  url = {https://link.aps.org/doi/10.1103/PhysRevLett.118.040401}
}

@article{PhysRevLett.116.133903,
  title = {Anomalous Edge State in a Non-Hermitian Lattice},
  author = {Lee, Tony E.},
  journal = {Phys. Rev. Lett.},
  volume = {116},
  issue = {13},
  pages = {133903},
  numpages = {5},
  year = {2016},
  month = {Apr},
  publisher = {American Physical Society},
  doi = {10.1103/PhysRevLett.116.133903},
  url = {https://link.aps.org/doi/10.1103/PhysRevLett.116.133903}
}

@ARTICLE{nc13-2123,
  author =       {Shu,Xiaoqian and Li, Aodong and Hu, Guangwei and Wang,Jian and Al\'u, Andrea and Chen,Lin},
  title =        {Fast encirclement of an exceptional point for highly efficient and compact chiral mode converters},
  journal =      {Nat. Commun.},
  year =         {2022},
  volume =       {13},
  number =       {1},
  pages =        {2123},
  month =        {Apr},
  doi =          {10.1038/s41467-022-29777-5},
}

@MISC{sm,
  author =       {},
  title =        {},
  howpublished = {},
  year =         {},
  month =        {},
  note =         {},
  source =       {},
}

@article{APR12-041409_2025,
    author = {Wu, Yulin and Sun, Yuhan and Liang, Qingquan and Zhang, Hong and Xia, Lipeng and Xu, Xiaochuan and Zou, Yi},
    title = {Accelerated exceptional point encirclement in anti-parity-time symmetric systems for ultra-compact chiral mode switching},
    journal = {Appl. Phys. Rev.},
    volume = {12},
    number = {4},
    pages = {041409},
    year = {2025},
    month = {11},
    issn = {1931-9401},
    doi = {10.1063/5.0257153},
    url = {https://doi.org/10.1063/5.0257153},
    eprint = { },
}

@article{PhysRevLett.129.127401,
  title = {Riemann-Encircling Exceptional Points for Efficient Asymmetric Polarization-Locked Devices},
  author = {Li, Aodong and Chen, Weijin and Wei, Heng and Lu, Guowei and Al\`u, Andrea and Qiu, Cheng-Wei and Chen, Lin},
  journal = {Phys. Rev. Lett.},
  volume = {129},
  issue = {12},
  pages = {127401},
  numpages = {6},
  year = {2022},
  month = {Sep},
  publisher = {American Physical Society},
  doi = {10.1103/PhysRevLett.129.127401},
  url = {https://link.aps.org/doi/10.1103/PhysRevLett.129.127401}
}

@article{APL127-122201_2025,
    author = {Yao, Jiabao and L\"u, Cheng and Zhang, Jianing and Song, Jie and Tang, Shuai and Jiang, Yongyuan},
    title = {Realization of encircling the exceptional point in non-Hermitian acoustic waveguide coupler},
    journal = {Appl. Phys. Lett. },
    volume = {127},
    number = {12},
    pages = {122201},
    year = {2025},
    month = {09},
    issn = {0003-6951},
    doi = {10.1063/5.0287164},
    url = {https://doi.org/10.1063/5.0287164},
    eprint = {},
}

@article{rspa.2024.0335,
    author = {Even, Nicolas and Nennig, Benoit and Lefebvre, Gautier and Perrey-Debain, Emmanuel},
    title = {Experimental dynamical encircling of an exceptional point in coupled pendulums},
    journal = {Proc. R. Soc. A},
    volume = {481},
    number = {2316},
    pages = {20240335},
    year = {2025},
    month = {06},
    issn = {1364-5021},
    doi = {10.1098/rspa.2024.0335},
    url = {https://doi.org/10.1098/rspa.2024.0335},
    eprint = {},
}

@article{EVEN2024118239,
title = {Experimental observation of exceptional points in coupled pendulums},
journal = {J. Sound Vib. },
volume = {575},
pages = {118239},
year = {2024},
issn = {0022-460X},
doi = {https://doi.org/10.1016/j.jsv.2024.118239},
url = {https://www.sciencedirect.com/science/article/pii/S0022460X24000038},
author = {Nicolas Even and Benoit Nennig and Gautier Lefebvre and Emmanuel Perrey-Debain},
}

@article{CP3-140_2020,
  title = {Direct observation of time-asymmetric breakdown of the standard adiabaticity around an exceptional point },
  author = {Choi, Youngsun and Yoon, Jae Woong and Hong, Jong Kyun and Ryu, Yeonghwa Ryu and Song, Seok Ho},
  journal = {Commun. Phys. },
  volume = {3 },
  issue = { },
  pages = {140},
  numpages = { },
  year = {2020 },
  month = { },
  publisher = { },
  doi = {10.1038/s42005-020-00409-y},
  url = {https://doi.org/10.1038/s42005-020-00409-y}
}

@article{lpor.202100675,
author = {Liu, Weiwei and Zhang, Yicong and Deng, Zhihua and Ye, Jianghua and Wang, Kai and Wang, Bing and Gao, Dingshan and Lu, Peixiang},
title = {On-Chip Chiral Mode Switching by Encircling an Exceptional Point in an Anti-Parity-Time Symmetric System},
journal = {Laser Photonics Rev.},
volume = {16},
number = {12},
pages = {2100675},
doi = {https://doi.org/10.1002/lpor.202100675},
url = {https://onlinelibrary.wiley.com/doi/abs/10.1002/lpor.202100675},
eprint = {},
year = {2022}
}

@article{nwac259_2023,
    author = {Bai, Kai and Fang, Liang and Liu, Tian-Rui and Li, Jia-Zheng and Wan, Duanduan and Xiao, Meng},
    title = {Nonlinearity-enabled higher-order exceptional singularities with ultra-enhanced signal-to-noise ratio},
    journal = {Natl. Sci. Rev.},
    volume = {10},
    number = {7},
    pages = {nwac259},
    year = {2022},
    month = {11},
    issn = {2095-5138},
    doi = {10.1093/nsr/nwac259},
    url = {https://doi.org/10.1093/nsr/nwac259},
    eprint = {},
}

@article{PhysRevLett.123.066405,
  title = {Classification of Exceptional Points and Non-Hermitian Topological Semimetals},
  author = {Kawabata, Kohei and Bessho, Takumi and Sato, Masatoshi},
  journal = {Phys. Rev. Lett.},
  volume = {123},
  issue = {6},
  pages = {066405},
  numpages = {7},
  year = {2019},
  month = {Aug},
  publisher = {American Physical Society},
  doi = {10.1103/PhysRevLett.123.066405},
  url = {https://link.aps.org/doi/10.1103/PhysRevLett.123.066405}
}

@article{PhysRevLett.127.253901,
  title = {General Rules Governing the Dynamical Encircling of an Arbitrary Number of Exceptional Points},
  author = {Yu, Feng and Zhang, Xu-Lin and Tian, Zhen-Nan and Chen, Qi-Dai and Sun, Hong-Bo},
  journal = {Phys. Rev. Lett.},
  volume = {127},
  issue = {25},
  pages = {253901},
  numpages = {6},
  year = {2021},
  month = {Dec},
  publisher = {American Physical Society},
  doi = {10.1103/PhysRevLett.127.253901},
  url = {https://link.aps.org/doi/10.1103/PhysRevLett.127.253901}
}

@article{PhysRevLett.130.157201,
  title = {Exceptional Non-Abelian Topology in Multiband Non-Hermitian Systems},
  author = {Guo, Cui-Xian and Chen, Shu and Ding, Kun and Hu, Haiping},
  journal = {Phys. Rev. Lett.},
  volume = {130},
  issue = {15},
  pages = {157201},
  numpages = {7},
  year = {2023},
  month = {Apr},
  publisher = {American Physical Society},
  doi = {10.1103/PhysRevLett.130.157201},
  url = {https://link.aps.org/doi/10.1103/PhysRevLett.130.157201}
}

@article{CP7_109_2024,
  title = {Exceptional classifications of non-Hermitian systems },
  author = {Ryu, Jung-Wan and Han, Jae-Ho and Yi, Chang-Hwan and Park,  Moon Jip and Park, Hee Chul},
  journal = {Commun. Phys. },
  volume = {7 },
  issue = { },
  pages = {109},
  numpages = { },
  year = {2024},
  month = {},
  publisher = { },
  doi = {10.1038/s42005-024-01595-9},
  url = {https://doi.org/10.1038/s42005-024-01595-9}
}

@article{CP8_91_2025,
  title = {Dynamical topology of chiral and nonreciprocal state transfers in a non-Hermitian quantum system },
  author = {Lu, Pengfei and Liu, Yang and Lao,  Qifeng and Liu,Teng and Rao,  Xinxin and Bian,Ji and Wu, Hao
             and Zhu, Feng and  Luo, Le },
  journal = {Commun. Phys. },
  volume = {8},
  issue = { },
  pages = {91},
  numpages = { },
  year = {2025},
  month = {},
  publisher = { },
  doi = {10.1038/s42005-025-01989-3 },
  url = {https://doi.org/10.1038/s42005-025-01989-3}
}

@article{NC15_1369_2024 ,
  title = {Resolving the topology of encircling multiple exceptional points },
  author = {Guria, Chitres and Zhong, Qi and Ozdemir, Sahin Kaya and Patil, Yogesh S. S. and 
          El-Ganainy, Ramy and Harris, Jack Gwynne Emmet },
  journal = {Nat. Commun. },
  volume = {15 },
  issue = { },
  pages = {1369},
  numpages = { },
  year = {2024},
  month = {},
  publisher = { },
  doi = {10.1038/s41467-024-45530-6 },
  url = {https://doi.org/10.1038/s41467-024-45530-6}
}

@article{PhysRevA.96.052129,
  title = {Chiral state conversion without encircling an exceptional point},
  author = {Hassan, Absar U. and Galmiche, Gisela L. and Harari, Gal and LiKamWa, Patrick
          and Khajavikhan, Mercedeh and Segev, Mordechai and Christodoulides, Demetrios N.},
  journal = {Phys. Rev. A},
  volume = {96},
  issue = {5},
  pages = {052129},
  numpages = {5},
  year = {2017},
  month = {Nov},
  publisher = {American Physical Society},
  doi = {10.1103/PhysRevA.96.052129},
  url = {https://link.aps.org/doi/10.1103/PhysRevA.96.052129}
}

@article{PhysRevA.102.040201,
  title = {Encircling exceptional points as a non-Hermitian extension of rapid adiabatic passage},
  author = {Feilhauer, J. and Schumer, A. and Doppler, J. and Mailybaev, A. A. and B\"ohm, J. and Kuhl, U. and Moiseyev, N. and Rotter, S.},
  journal = {Phys. Rev. A},
  volume = {102},
  issue = {4},
  pages = {040201},
  numpages = {6},
  year = {2020},
  month = {Oct},
  publisher = {American Physical Society},
  doi = {10.1103/PhysRevA.102.040201},
  url = {https://link.aps.org/doi/10.1103/PhysRevA.102.040201}
}

@article{PhysRevA.103.023531,
  title = {On-chip experiment for chiral mode transfer without enclosing an exceptional point},
  author = {Liu, Qingjie and Liu, Jibing and Zhao, Dong and Wang, Bing},
  journal = {Phys. Rev. A},
  volume = {103},
  issue = {2},
  pages = {023531},
  numpages = {5},
  year = {2021},
  month = {Feb},
  publisher = {American Physical Society},
  doi = {10.1103/PhysRevA.103.023531},
  url = {https://link.aps.org/doi/10.1103/PhysRevA.103.023531}
}

@article{nature605-256_2022,
	title = {Observation of chiral state transfer without encircling an exceptional point},
	volume = {605},
	issn = {1476-4687},
	url = {https://doi.org/10.1038/s41586-022-04542-2},
	doi = {10.1038/s41586-022-04542-2},
    number = {7909},
	journal = {Nature},
	author = {Nasari, Hadiseh and Lopez-Galmiche, Gisela and Lopez-Aviles, Helena E. and Schumer, Alexander and Hassan, Absar U. and Zhong, Qi and Rotter, Stefan and LiKamWa, Patrick and Christodoulides, Demetrios N. and Khajavikhan, Mercedeh},
	month = may,
	year = {2022},
	pages = {256},
}

@article{PhysRevResearch.5.033053,
  title = {Universal state conversion in discrete and slowly varying non-Hermitian cyclic systems: An analytic proof and exactly solvable examples},
  author = {Nye, Nicholas S.},
  journal = {Phys. Rev. Res.},
  volume = {5},
  issue = {3},
  pages = {033053},
  numpages = {45},
  year = {2023},
  month = {Jul},
  publisher = {American Physical Society},
  doi = {10.1103/PhysRevResearch.5.033053},
  url = {https://link.aps.org/doi/10.1103/PhysRevResearch.5.033053}
}

@article{PRXQuantum.6.020328,
  title = {Topological Eigenvalue Braiding and Quantum State Transfer Near a Third-Order Exceptional Point},
  author = {Zhang, He and Liu, Tong and Xiang, Zhongcheng and Xu, Kai and Fan, Heng and Zheng, Dongning},
  journal = {PRX Quantum},
  volume = {6},
  issue = {2},
  pages = {020328},
  numpages = {17},
  year = {2025},
  month = {May},
  publisher = {American Physical Society},
  doi = {10.1103/PRXQuantum.6.020328},
  url = {https://link.aps.org/doi/10.1103/PRXQuantum.6.020328}
}

@article{nc9-4808_2018,
  title = {Winding around non-Hermitian singularities },
  author = {Zhong, Qi and Khajavikhan, Mercedeh and Christodoulides, Demetrios N. and El-Ganainy, Ramy },
  journal = {Nat. Commun.  },
  volume = {9 },
  issue = { },
  pages = {4808 },
  numpages = { },
  year = {2018},
  month = {},
  publisher = { },
  doi = {10.1038/s41467-018-07105-0},
  url = {https://doi.org/10.1038/s41467-018-07105-0}
}

@article{APLQ1-046107_2024,
    author = {Nye, Nicholas S. and Kantartzis, Nikolaos V.},
    title = {Adiabatic state conversion for (a)cyclic non-Hermitian quantum Hamiltonians of generalized functional form},
    journal = {APL Quantum},
    volume = {1},
    number = {4},
    pages = {046107},
    year = {2024},
    month = {11},
    issn = {2835-0103},
    doi = {10.1063/5.0225403},
    url = {https://doi.org/10.1063/5.0225403},
    eprint = { },
}

@article{PhysRevLett.128.160401,
  title = {Topological Quantum State Control through Exceptional-Point Proximity},
  author = {Abbasi, Maryam and Chen, Weijian and Naghiloo, Mahdi and Joglekar, Yogesh N. and Murch, Kater W.},
  journal = {Phys. Rev. Lett.},
  volume = {128},
  issue = {16},
  pages = {160401},
  numpages = {6},
  year = {2022},
  month = {Apr},
  publisher = {American Physical Society},
  doi = {10.1103/PhysRevLett.128.160401},
  url = {https://link.aps.org/doi/10.1103/PhysRevLett.128.160401}
}

@article{PhysRevLett.134.146602,
  title = {Photonic Chiral State Transfer near the Liouvillian Exceptional Point},
  author = {Gao, Huixia and Sun, Konghao and Qu, Dengke and Wang, Kunkun and Xiao, Lei and Yi, Wei and Xue, Peng},
  journal = {Phys. Rev. Lett.},
  volume = {134},
  issue = {14},
  pages = {146602},
  numpages = {7},
  year = {2025},
  month = {Apr},
  publisher = {American Physical Society},
  doi = {10.1103/PhysRevLett.134.146602},
  url = {https://link.aps.org/doi/10.1103/PhysRevLett.134.146602}
}

@article{PhysRevResearch.7.013159,
  title = {Controlling transfer and chirality of topological quantum state through dissipation in quantum walk},
  author = {Tang, Xing and Chen, Tian and Zhang, Xiangdong},
  journal = {Phys. Rev. Res.},
  volume = {7},
  issue = {1},
  pages = {013159},
  numpages = {18},
  year = {2025},
  month = {Feb},
  publisher = {American Physical Society},
  doi = {10.1103/PhysRevResearch.7.013159},
  url = {https://link.aps.org/doi/10.1103/PhysRevResearch.7.013159}
}

@MISC{arxiv250204214,
  author =       {Kumar, Parveen and Gefen, Yuval and Snizhko, Kyrylo },
  title =        {General theory of slow non-Hermitian evolution},
  howpublished = {arXiv:2502.04214},
  year =         {2025},
  month =        {},
  note =         {},
  source =       {},
}

\onecolumngrid
\clearpage

\begin{center}
\textbf{\large Supplemental Material}
\end{center}

\setcounter{equation}{0}
\setcounter{figure}{0}
\setcounter{table}{0}
\setcounter{page}{1}
\setcounter{secnumdepth}{3}
\makeatletter

\renewcommand{\thefigure}{S\arabic{figure}}
\renewcommand{\theequation}{S\arabic{equation}}
\renewcommand{\thesection}{S\Roman{section}}

\section{Hamiltonian, Transfer Matrix and Exact Solution}
We consider the Hamiltonian
\begin{equation}\label{eq:sol:H}
  H= \kappa \sigma_x +h_z(t) \sigma_z, \quad \text{where} \quad
  h_z(t)= i\left(g_0 -\rho e^{i\theta} \right),
\end{equation}
with $\theta=\omega t$. The evolution equation can be rewritten into a second order differential equation for $a(t)$,
\begin{equation}\label{eq:forA}
  \frac{d^2 a}{d\theta^2} -\left[ \frac{\rho^2}{\omega^2}e^{2i\theta} - \left(\frac{2g_0}{\omega}+i \right) \frac{\rho}{\omega} e^{i \theta} +\left(\frac{g_0^2}{\omega^2}-\frac{\kappa^2}{\omega^2} \right) \right] a=0.
\end{equation}
After the substitutions
\begin{equation}\label{eq:subst}
   \eta=-2i \frac{\rho}{\omega} e^{i\theta}, \quad a(t)=e^{it\sqrt{\kappa^2-g_0^2}} e^{-\eta/2} W(\eta),
\end{equation}
the differential equation reduces to
\begin{equation}\label{eq:W}
  \eta \frac{d^2 W}{d\eta^2} +\left[ 1+\frac{2 \sqrt{\kappa^2-g_0^2}}{\omega} -\eta\right] \frac{dW}{d\eta} -\frac{1}{\omega} \left[ ig_0+\sqrt{\kappa^2-g_0^2} \right] W =0,
\end{equation}
whose general solution reads
\begin{equation}\label{eq:sol:Ww}
  W(\eta)=c_1 F(p_1, p_2, \eta)+c_2 U(p_1,p_2, \eta),
\end{equation}
where $p_1=(ig_0+\sqrt{\kappa^2-g_0^2})/\omega$, $p_2=1+2\sqrt{\kappa^2-g_0^2}/\omega$, while $F$ and $U$ are the confluent hypergeometric functions of the first and second kind, respectively. For simplicity we shall use the following abbreviations for the hypergeometric functions:
\begin{equation*}
 F(n+p_1,n+p_2,\eta)  \leftrightarrow F^{(n)}, \quad U(n+p_1,n+p_2,\eta) \leftrightarrow U^{(n)}.
\end{equation*}
We would also denote the \textbf{adjoint matrix} of a square matrix $A$ as $A^\ddag$.

From the equation of motion we have
\begin{eqnarray*}
  \kappa b &=& i\partial_t a -ig_0 a +i\rho e^{i\theta} a \\
           &=& -\omega p_1 a-\omega \eta e^{it\sqrt{\kappa^2-g_0^2}} e^{-\eta/2} \left(c_1 \frac{p_1}{p_2} F^{(1)} -c_2 p_1 U^{(1)} \right).
\end{eqnarray*}
Then $(a,b)^T= e^{it\sqrt{\kappa^2-g_0^2}} e^{-\eta/2} M(\eta) (c_1,c_2)^T$, where
\begin{equation}\label{eq:loop-A:matrixM}
  M(\eta)=\left[
         \begin{array}{cc}
           F^{(0)} & \quad U^{(0)} \\
           -\frac{\omega}{\kappa} p_1 \left(F^{(0)} +\frac{\eta}{p_2} F^{(1)} \right) & \quad -\frac{\omega}{\kappa} p_1 \left(U^{(0)} -\eta  U^{(1)} \right) \\
         \end{array}
       \right].
\end{equation}
The transfer matrix can be expressed as
\begin{equation}
  S(\theta_f,\theta_i)= e^{i(\theta_f-\theta_i)\sqrt{\kappa^2-\rho^2}/\omega} e^{-(\eta_f-\eta_i)/2} M(\eta_f) \left[M(\eta_i) \right]^{-1},
\end{equation}
where $\eta_{f,i}=-2i \frac{\rho}{\omega} e^{i\theta_{f,i}}$. 
The inverse of $M(\eta)$ can be expressed in terms of its determinant and adjoint matrix, $[M(\eta)]^{-1}=\frac{1}{\det{M(\eta)}} M(\eta)^{\ddag}$, where
\begin{eqnarray}
  \det{M(\eta)}   &=& \frac{\omega}{\kappa} \frac{\Gamma(p_2)}{\Gamma(p_1)} \eta^{1-p_2} e^{\eta}, \\
  M(\eta)^{\ddag} &=& \left[
                    \begin{array}{cc}
                      -\frac{\omega}{\kappa} p_1 \left(U^{(0)} -\eta  U^{(1)} \right) & \quad -U^{(0)}\\
                      \frac{\omega}{\kappa} p_1 \left(F^{(0)} +\frac{\eta}{p_2} F^{(1)} \right) & \quad  F^{(0)} \\
                    \end{array}
                  \right].
\end{eqnarray}
Then the transfer matrix reads
\begin{equation}
  S(\theta_f,\theta_i)= \frac{\kappa}{\omega} \frac{\Gamma(p_1)}{\Gamma(p_2)} \eta_i^{p_2-1} e^{i(\theta_f-\theta_i) \sqrt{\kappa^2-\rho^2}/\omega} e^{-(\eta+\eta_0)/2} M(\eta_f) M(\eta_i)^{\ddag}.
\end{equation}
By redefining $M$, we can obtain the solution Eq.(7) in the main text.

\section{Symmetry Analysis}
The parameter $h_z(\theta)=i(g_0-\rho e^{i\theta}),\; \theta=\omega t$. Then the Hamiltonian could also be written as a function of $\theta$:
\begin{equation*}
  H(\theta)\equiv \kappa \sigma_x +h_z(\theta)\sigma_z.
\end{equation*}
The parameter $h_z(\theta)$ has the symmetry
\begin{equation*}
  h_z(\theta)^\ast=-h_z(-\theta)=-h_z(2\pi-\theta),
\end{equation*}
and hence
\begin{equation}
  [H(\theta)]^T =H(\theta), \qquad \sigma_z H(\theta)^\ast \sigma_z = -H(2\pi-\theta).
\end{equation}
For \emph{CCW cycling}, $\theta$ increases, and the transfer matrix
\begin{equation*}
  S_{\text{CCW}}(\theta_f,\theta_i) \equiv \mathcal{T}_\theta \exp\left[-i\frac{1}{\omega}\int_{\theta_i}^{\theta_f} H(\theta)d\theta \right],
\end{equation*}
with $\theta_f>\theta_i$, and $\mathcal{T}_\theta$ means $\theta$-ordering. On the other hand, for CW cycling, $\theta$ decreases, and the corresponding transfer matrix 
\begin{equation*}
 S_{\text{CW}}(\theta_f,\theta_i) \equiv \mathcal{T}_{-\theta} \exp\left[-i\frac{1}{\omega}\int_{\theta_f}^{\theta_i} H(\theta)d\theta \right],
\end{equation*}
with $\theta_i>\theta_f$, and $\mathcal{T}_{-\theta}$ means ``inverse-$\theta$-ordering''. 
Following the symmetry of $H(\theta)$, we have
\begin{align}
  \sigma_z\left[S_{\text{CCW}}(\theta_f,\theta_i)\right]^\ast\sigma_z &=\mathcal{T}_\theta e^{-i\frac{1}{\omega} \int_{\theta_i}^{\theta_f} H(2\pi-\theta)d\theta}  \nonumber \\
   &=\mathcal{T}_{-\theta} e^{-i\frac{1}{\omega} \int_{2\pi-\theta_f}^{2\pi-\theta_i} H(\theta)d\theta} \nonumber\\
   &=S_{\text{CW}}(2\pi-\theta_f, 2\pi-\theta_i) \nonumber\\
   &=S_{\text{CW}}(-\theta_f, -\theta_i).
\end{align}
This is just the property (i) in the main text. Similarly, we can prove
\begin{equation}
  \left[S_{\text{CCW}}(\theta_f,\theta_i)\right]^T =\mathcal{T}_{-\theta}\exp\left[ -i\int_{\theta_i}^{\theta_f} H(\theta)d\theta\right]   =S_{\text{CW}}(\theta_i, \theta_f).
\end{equation}
This is the property (ii) in the main text. Combining the above two properties, we have 
\begin{equation}
  \sigma_z \left[S_{\text{CCW/CW}}(\theta_f,\theta_i)\right]^\dag \sigma_z =S_{\text{CCW/CW}}(-\theta_i,-\theta_f).
\end{equation}
This is the property (iii) in the main text. 

Furthermore, the determinant 
\begin{equation}
  \det[S_{\text{CCW}}(\theta_f,\theta_i)]=\exp\left[-i\frac{1}{\omega}\int_{\theta_i}^{\theta_f}\text{Tr}[H(\theta)] d\theta\right] =1.
\end{equation}
This is just the property (iv) in the main text.

In Floquet theory, the quasienergies $\epsilon_{1,2}$ and corresponding Floquet modes $|\Phi_{1,2}\rangle$ are defined through the eigenvalues of the one-period transfer matrix by
\begin{equation*}
  S(\theta_i+2\pi, \theta_i) |\Phi_\alpha\rangle =e^{-i(2\pi/\omega)\epsilon_\alpha}  |\Phi_\alpha\rangle. 
\end{equation*}
So that $\text{Tr}[ S(\theta_i+2\pi, \theta_i)]=e^{-i(2\pi/\omega)\epsilon_1} +e^{-i(2\pi/\omega)\epsilon_2}$. The property (iv) tells us that $\epsilon_1+\epsilon_2=0$, and hence
\begin{equation}
  \text{Tr}[ S(\theta_i+2\pi, \theta_i)]=2\cos[(2\pi/\omega)\epsilon_1].
\end{equation}
The Floquet analysis below gives that $\epsilon_{1,2}=\pm\sqrt{\kappa^2-g_0^2}$, and hence we obtain property (v) in the main text.

\section{Lessons from Numerical Simulation}
Let's focus on the numerical simulations in some special cases and learn some important lessons.

(1). For $\theta_i=\pi, \kappa/\omega=10, \rho/\kappa=1$, the starting point lies in the broken phase and the transfer matrix elements read 
      \begin{align*}
        S_{11} &= 1.6003286929485..., \\
        S_{22} &= 0.39967130705149..., \\
        S_{12} &= -0.26983259101832... + 0.53626944011531...\, i,\\
        S_{21} &= 0.26983259101832... + 0.53626944011531...\,i.
      \end{align*} 
     For $\theta_i=\pi, \kappa/\omega=10, \rho/\kappa=6$, 
      \begin{align*}
        S_{11} &= 2.0644847016278..., \\
        S_{22} &= -0.064484701627799..., \\
        S_{12} &= -0.93027858578974...-0.51740644837560...\, i,\\
        S_{21} &= 0.93027858578974... - 0.51740644837560...\, i.
      \end{align*} 
However, to obtain the above exact results, high precision should be used in numerical computation. For example, the result for $\rho/\kappa=6$ can be obtained by using precision larger than 48. Using double precision to evaluate the exact solution would result in 
\begin{equation*}
S_{\text{exact,double-precision}}\approx \left(
                   \begin{array}{cc}
                     1.06448 & -0.930279 - 0.517406 i \\
                     0.930279 - 0.517406 i & -1.06448 \\
                   \end{array}
                 \right)
\end{equation*}
It's obviously wrong since its trace is zero ( which should be 2) and its determinant is very close to zero (which should be 1). 

Even worse is to evaluate the transfer matrix by numerical integration of the evolution equation. For example, for $\theta_i=\pi, \kappa/\omega=10, \rho/\kappa=6$, using double precision, 4th-order Runge-Kutta method and taking 2000 steps in one period, the transfer matrix reads
      \begin{equation*}
        S_{\text{num,prec=16,steps=2000}}\approx 
        \left(
          \begin{array}{cc}
            -20.7508-13.5187i & -0.473215 -0.265766i \\
            -0.979836 + 1.48367 i & -0.019292 + 0.0338536 i \\
          \end{array}
        \right)\times 10^{70}.
      \end{equation*}
Now the trace, the determinant and even the symmetry are all wrong.  If we increase the working precision to be 220, the result reads
\begin{align*}
        & S_{\text{num,prec=220,steps=2000}} \\
        &\approx 
        \left(
          \begin{array}{cc}
           1.93979483799   &  -0.870121625323 + 0.355109412225 i \\
           0.87012162532 + 0.35510941223 i   & 0.060205162014  \\
          \end{array}
        \right).
      \end{align*}
Further increase the precision to 250 and the step number to 10000, 
      \begin{align*}
       & S_{\text{num,prec=250,steps=10000}} \\
       & \approx 
        \left(
          \begin{array}{cc}
           2.05973794035   & -0.942704950489 - 0.484099037946 i  \\
           0.942704950489 - 0.484099037946 i   & -0.0597379403497 \\
          \end{array}
        \right).
      \end{align*}
Now, the trace, determinant and the symmetry are all correct upto a small relative error. 

\textbf{The lesson}: In numerical simulation of non-Hermitian dynamics the \textbf{precision} and \textbf{step size} are crucial in obtaining the correct result. The reason is that \textbf{non-Hermitian dynamics is usually exponentially sensitive to noise} while the numerical roundoff error plays a role of some quenched noise.

(2). For $\theta_i=0, \kappa/\omega=10, \rho/\kappa=1$, the starting point lies in the symmetric phase and the exact transfer matrix elements are
      \begin{align*}
        S_{11} &= 1.0770577614172...\times 10^{22} \\
        S_{22} &= 1.0770577614172...\times 10^{22} \\
        S_{12} &= 1.07582386952278...\times10^{22}-5.154050052286...\times10^{20} i\\
        S_{21} &= -1.07582386952278...\times10^{22}-5.154050052286...\times10^{20} i.
      \end{align*}
Numerical integration method gives
      \begin{align*}
       & S_{\text{num,double precision,steps=10000}} \\
       & \approx 
        \left(
          \begin{array}{cc}
           1.07674\times10^{22} + 1.71295\times10^7 i    &  1.07551\times10^{22}-5.15253\times10^{20}i \\
           -1.07551\times10^{22}-5.15253\times10^{20} i  &  -1.07674\times10^{22} + 2.17907\times10^{7}i \\
          \end{array}
        \right).
      \end{align*}
The relative error is small although the absolute error is large. Increase the working precision to 100: 
       \begin{align*}
       & S_{\text{num,precision=100,steps=10000}} \\
       & \approx 
        \left(
          \begin{array}{cc}
           1.07674\times10^{22}    &  1.07551\times10^{22}-5.15253\times10^{20}i \\
           -1.07551\times10^{22}-5.15253\times10^{20}i  &  -1.07674\times10^{22}  \\
          \end{array}
        \right).
      \end{align*}
The relative error becomes even smaller. 

\textbf{The lession}: The sensitivity to noise of the non-Hermitian dynamics is dependent on the starting/ending points of the loop. In the special case studied above, the sensitivity is high (low) if the starting point lies in the broken (symmetric) phase.

\section{Floquet Analysis}
According to Floquet theory, there exist Floquet states $|\psi_\lambda(t)\rangle $ that satisfy
\begin{equation}\label{eq:Floquet}
  |\psi_\lambda (t) \rangle =e^{-i\lambda t} |\Phi_\lambda(t)\rangle, \quad \text{with}\quad |\Phi_\lambda(t+T)\rangle=|\Phi_\lambda(t)\rangle,
\end{equation}
and $\lambda$ is called \emph{quasienergy}. The quasienergy spectrum is important for stroboscopic dynamics. In this section we analyze this spectrum. It's obvious that the Floquet states should be eigenstates of the evolution operator (or transfer matrix) in one period $S\equiv S(\theta_i+2\pi,\theta_i)$:
\begin{equation}\label{eq:Floquet:eigen}
  S|\Phi_\lambda\rangle = e^{-i\lambda T} |\Phi_\lambda\rangle, \quad \text{where} \quad |\Phi_\lambda\rangle\equiv |\Phi_\lambda(0)\rangle.
\end{equation}
Hence the quasienergy spectrum can be obtained from the eigenvalues of the transfer matrix $S$.
Furthermore, since $H$ is traceless, Liouville's formula implies that $\det S=\exp\left[ -i\int_{0}^T dt \; \text{Tr} H(t) \right]=1$. This means that the two quasienergies $\lambda_{1,2}$ satisfy $\lambda_1+\lambda_2=0$ and hence we can write them as $\lambda_{1,2}=\pm\lambda$. Consequently, the transfer matrix in one period should take the form
\begin{equation*}
  S=\cos\lambda\; \mathds{I} +\vec{n}\cdot\vec{\sigma}, 
\end{equation*}
with $\vec{n}\cdot\vec{n}=0$ since $\det S=1$.

We now give a lemma about the quasienergy spectrum.
\begin{lemma}
  If the Hamiltonian takes the form $H(t)=\sum_{n=0}^{+\infty}H^{(n)} e^{i n \omega t}$, then the quasienergies $\lambda_\alpha$ are just the eigenvalues of $H^{(0)}$. 
\end{lemma}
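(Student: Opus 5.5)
The plan is to turn the periodic problem into a linear ODE in the complex variable $z=e^{i\omega t}$ and read off the monodromy at its regular singular point $z=0$. Since $\partial_t = i\omega z\,\partial_z$, the evolution equation $i\partial_t\psi=H\psi$ becomes
\begin{equation*}
  \frac{d\psi}{dz} = -\frac{1}{\omega z}\Big(H^{(0)} + \sum_{n\geq 1} H^{(n)} z^n\Big)\psi \equiv \Big(\frac{A}{z} + B(z)\Big)\psi, \qquad A=-\frac{H^{(0)}}{\omega},
\end{equation*}
where $B(z)$ is holomorphic near $z=0$. In our case $H(t)$ is a polynomial in $z$, so $B$ is entire. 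The physical evolution over one period $T=2\pi/\omega$ is the analytic continuation of a solution once around the circle $|z|=1$, i.e.\ around $z=0$ counterclockwise. Hence the one-period transfer matrix $S(\theta_i+2\pi,\theta_i)$ is conjugate to the monodromy matrix of this Fuchsian system at $z=0$. The statement then reduces to computing the monodromy eigenvalues.

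In the non-resonant case, where no two eigenvalues of $A$ differ by a nonzero integer, I would build a Frobenius-type fundamental solution $\Psi(z)=P(z)z^{A}$ with $P(z)=\sum_{n\ge0}P_n z^n$ and $P_0=\mathds{I}$. Substituting gives the recursion
\begin{equation*}
  nP_n + AP_n - P_nA = \sum_{m=1}^{n} B_{m-1}P_{n-m}.
\end{equation*}
It is solvable uniquely because the Sylvester operator $X\mapsto nX+[A,X]$ has eigenvalues $n+\mu_j-\mu_k\neq0$. Convergence of $P(z)$ on the disk where $B$ is holomorphic follows from the standard majorant argument, so here for all $z$. Back in time variables this is exactly the Floquet form
\begin{equation*}
  U(t)=\Phi(t)e^{-iH^{(0)}t}, \qquad \Phi(t)=\sum_{n\ge0}P_ne^{in\omega t},
\end{equation*}
with $\Phi$ being $T$-periodic. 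Comparing with Eq.~(\ref{eq:Floquet}) identifies the Floquet Hamiltonian with $H^{(0)}$, so the multipliers are $e^{-i\lambda T}$ with $\lambda$ running over the eigenvalues of $H^{(0)}$. Equivalently, the $n=0$ order of the recursion forces $K=H^{(0)}$.

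The main obstacle is the resonant case, where eigenvalues of $H^{(0)}$ differ by a nonzero multiple of $\omega$. For our $2\times2$ model this means $2\sqrt{1-g_0^2}/\omega\in\mathbb{Z}$. Then the recursion may fail and logarithmic terms can appear. I would handle this first by a continuity argument. Perturb $H^{(0)}\to H^{(0)}+\delta\,\sigma_z$ so that the resonance is removed. The monodromy depends continuously on $\delta$, since the solution of a linear ODE over the finite interval $[0,T]$ depends continuously on parameters, and so do the eigenvalues of $H^{(0)}+\delta\sigma_z$. Letting $\delta\to0$ then shows that the multipliers are still $e^{-i\lambda T}$ with $\lambda\in\mathrm{spec}(H^{(0)})$. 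As a backup, one can invoke the general theorem on Fuchsian singularities (Levelt/Gantmacher): the local monodromy at a regular singular point with residue $A$ always has eigenvalues $e^{2\pi i\mu}$, $\mu\in\mathrm{spec}(A)$.

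The conclusion holds, as always for quasienergies, modulo $\omega$. For the present Hamiltonian we have $H^{(0)}=\kappa\sigma_x+ig_0\sigma_z$, whose eigenvalues are $\pm\sqrt{\kappa^2-g_0^2}$. Combined with $\det S=1$, this gives property (v).
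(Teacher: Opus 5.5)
Your proof is correct, and it takes a different route from the paper's.

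The paper works in the extended space $\mathcal{H}\otimes\mathcal{T}$. It observes that $\hat Q_{mm'}=m\omega\delta_{mm'}+H^{(m-m')}$ is block lower-triangular, reads the quasienergies off the diagonal blocks $m\omega+H^{(0)}$, and builds Floquet modes from the forward recursion $[H^{(0)}+m\omega-\lambda_\alpha]|\phi_{\alpha,m}\rangle=-\sum_{m'=1}^{m}H^{(m')}|\phi_{\alpha,m-m'}\rangle$. Your Frobenius recursion is this same recursion in matrix form: take $|\phi_{\alpha,n}\rangle=P_nv_\alpha$ with $H^{(0)}v_\alpha=\lambda_\alpha v_\alpha$. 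So the computational core is shared. Recasting the problem as a Fuchsian system in $z=e^{i\omega t}$ then supplies what the paper leaves implicit:
\begin{itemize}
\item Convergence of the Fourier series of the Floquet modes, via the majorant argument.
\item The full normal form, so that $S$ is conjugate to $e^{-iH^{(0)}T}$ and both multipliers are accounted for at once. The paper instead relies on the heuristic that the diagonal of an infinite triangular operator fixes its spectrum.
\item The non-diagonalizable case. This is in fact the paper's own setting $g_0=\kappa=1$, where $H^{(0)}=\sigma_x+i\sigma_z$ is a Jordan block. There the paper's recursion produces only one Floquet mode, while your $z^A$ gives a non-diagonalizable $S$ with double eigenvalue $1$, consistent with the Supplement's numerics ($\mathrm{Tr}\,S=2$, $S\neq I$).
\item The resonant case. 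There the paper's recursion must invert a singular $\lambda_\alpha-m\omega-H^{(0)}$, and your continuity or Levelt argument takes over.
\end{itemize}
The paper's route is shorter and yields explicit Floquet modes in the generic case.

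Two small fixes:
\begin{itemize}
\item From $zP'+PA-AP=zBP$ the recursion is $nP_n-[A,P_n]=\sum_{m=1}^{n}B_{m-1}P_{n-m}$, so your commutator sign is flipped. This is harmless because the spectrum of $\mathrm{ad}_A$ is symmetric, and with the correct sign it matches the paper's recursion exactly.
\item The propagator normalized at $t=0$ is $\Phi(t)e^{-iH^{(0)}t}\Phi(0)^{-1}$, with $\Phi(0)=P(1)$ invertible by Liouville's formula. So the Floquet Hamiltonian is conjugate to $H^{(0)}$, not equal to it.
\end{itemize}
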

\begin{proof}
  We prove it by using the Floquet-Hilbert formalism. 
  According to the Floquet theorem, the Schr\"odinger equation has some steady-state solutions of the form
  $|\psi_\alpha(t)\rangle=e^{-i\lambda_\alpha t}|\phi_\alpha(t)\rangle$, where $\lambda_\alpha$ are called \emph{quasienergies} while the periodic \emph{Floquet modes} $|\phi_\alpha(t)\rangle=|\phi_\alpha(t+T)\rangle$ satisfies
  \begin{equation}\label{eq:Floquet:mode}
    \left[H(t) -i\frac{\partial}{\partial t}\right]|\phi_\alpha(t)\rangle =\lambda_\alpha |\phi_\alpha(t)\rangle. 
  \end{equation}
  This equation could be seen as an steady-state equation in the Floquet-Hilbert space $\mathcal{F}=\mathcal{H}\otimes\mathcal{T}$, where $\mathcal{T}$ is the space of bounded periodic functions over $[0,T]$. The quasienergies $\lambda_\alpha$ is then the eigenvalues of the operator $\hat{Q}\equiv H(t) -i\frac{\partial}{\partial t}$. 
  A complete set of basis of $\mathcal{F}$ can be obtained by direct product between a complete set of orthonormal basis $|\alpha\rangle$ of $\mathcal{H}$ and the complete set of time-periodic functions $e^{im\omega t}$ labeled by an integer $m$, 
  \begin{equation*}
    |\alpha m(t)\rangle \equiv \langle t|\alpha m\rangle\rangle =|\alpha\rangle e^{im\omega t}. 
  \end{equation*}
  In the basis $|\alpha m\rangle\rangle$, the operator $\hat{Q}$ has a block structure, with each block representing an operator in the Hilbert space, 
  \begin{equation*}
    \hat{Q}_{mm'} = m\omega \delta_{mm'} + H^{(m-m')}. 
  \end{equation*}
  Obviously, for  $H(t)=\sum_{n=0}^{+\infty}H^{(n)} e^{i n \omega t}$, the matrix for $\hat{Q}$ is a lower triangular matrix with respect to the index $m$. So the eigenvalues should be the eigenvalues of the diagonal blocks, $m\omega +H^{(0)}$. Since $m\omega+\lambda_k$ is equivalent to $\lambda_k$ as a quasienergy, so we conclude that the quasienergies are just the eigenvalues of $H^{(0)}$.  
  
  Furthermore, the corresponding Floquet mode could also be constructed from the eigenvector of $H^{(0)}$. For this purpose, we come back to the Hilbert space and Eq.(\ref{eq:Floquet:mode}). Taking the ansatz 
  \begin{equation*}
    |\phi_\alpha(t)\rangle=\sum_{n=0}^\infty |\phi_{\alpha,n}\rangle e^{in\omega t},
  \end{equation*}
  and inserting this form into Eq.(\ref{eq:Floquet:mode}), we obtain
  \begin{equation*}
    \sum_{m'=0}^{\infty} H^{(m-m')}|\phi_{\alpha,m'}\rangle +m\omega|\phi_{\alpha,m}\rangle =\lambda_\alpha |\phi_{\alpha,m}\rangle. 
  \end{equation*}
  Since $H^{(n)}$ is nonzero only for nonnegative $n$, the above equation reduces to the following recursive relations:
  \begin{align*}
   & H^{(0)} |\phi_{\alpha,0}\rangle =\lambda_\alpha|\phi_{\alpha,0}\rangle, \\
   & \left[H^{(0)}+\omega -\lambda_\alpha\right] |\phi_{\alpha,1}\rangle =-H^{(1)}|\phi_{\alpha,0}\rangle, \\
   & \cdots \quad \cdots \\
   & \left[H^{(0)}+m\omega -\lambda_\alpha\right] |\phi_{\alpha,m}\rangle = -\sum_{m'=1}^{m} H^{(m')} |\phi_{\alpha,m-m'}\rangle,\\
   &\cdots \quad \cdots
  \end{align*}
  In this way, the Floquet mode could be constructed from the eigenvector $|\phi_{\alpha,0}\rangle$ corresponding to the eigenvalue $\lambda_\alpha$ of $H^{(0)}$ recursively. Specially, if $H^{(n)}$ does not vanish only for $n=0$ and $n=1$, then the above recursive relations can be solved explicitly, resulting in 
  \begin{equation*}
     |\phi_\alpha(t)\rangle=\left\{ \mathds{I} +\sum_{n=1}^\infty  e^{in\omega t} \prod_{m=n}^{1}\left[\left(\lambda_\alpha-m\omega-H^{(0)}\right)^{-1} H^{(1)}\right] \right\} |\phi_{\alpha,0}\rangle. 
  \end{equation*}
\end{proof}

\section{Asymptotic Analysis}
\textbf{(1) Asymptotic behavior of $F$}. 
For fixed values of $a,b$ and as $|z|\rightarrow\infty$,  $F(a,b,z)={}_1F_{1}(a,b,z)$,
\begin{align*}
  F(a,b,z) &= \frac{\Gamma(b)}{\Gamma(b-a)} (-z)^{-a} \sum_{r=0}^N \frac{(a)_r(a-b+1)_r}{r!} (-z)^{-r} +O(|z|^{-N-a-1}) \\
   &\quad + \frac{\Gamma(b)}{\Gamma(a)} e^z z^{a-b}\sum_{r=0}^M \frac{(1-a)_r(b-a)_r}{r!} (z)^{-r} +O(|e^zz^{-N+a-b-1}|).
\end{align*}
If $\text{Re}z\rightarrow\infty$, then
\begin{equation*}
 F(a,b,z) =  \frac{\Gamma(b)}{\Gamma(a)} e^z z^{a-b} [1+O(|z|^{-1})].
\end{equation*}
If $\text{Re}z\rightarrow -\infty$, then
\begin{equation*}
 F(a,b,z) =  \frac{\Gamma(b)}{\Gamma(b-a)} (-z)^{-a} [1+O(|z|^{-1})].
\end{equation*}
As $|z|\rightarrow\infty$, $-\frac{3}{2}\pi<\arg z<\frac{3}{2}\pi$, 
\begin{equation*}
  U(a,b,z)=z^{-a}  \sum_{r=0}^N (-1)^r\frac{(a)_r(a-b+1)_r}{r!} (z)^{-r}  +O(|z|^{-N-a-1}).
\end{equation*}

\textbf{(2) Loops starting from $\theta_i=\pi$.} 
As $\rho\rightarrow \infty$, the asymptotic behavior of $F(p_1,p_2,\eta_0)$ should be (note that $\eta_0=2i\rho/\omega$ is a purely imaginary number,i.e., $\text{Re}\eta_0=0$)
\begin{equation*}
 F(p_1,p_2,\eta_0) \rightarrow \frac{\Gamma(p_2)}{\Gamma(p_2-p_1)}(-\eta_0)^{-p_1} +\frac{\Gamma(p_2)}{\Gamma(p_1)} e^{\eta_0} \eta_0^{p_1-p_2}.
\end{equation*}
And hence
\begin{align*}
  F^{(0)}_0 &\rightarrow \frac{(-\eta_0)^{-i\kappa/\omega}}{\Gamma(1-i\kappa/\omega)},\\
  F^{(1)}_0 &\rightarrow \frac{(-\eta_0)^{-1-i\kappa/\omega}}{\Gamma(1-i\kappa/\omega)} +\frac{(\eta_0)^{-1+i\kappa/\omega}}{\Gamma(1+i\kappa/\omega)} e^{\eta_0}.
\end{align*}
So that
\begin{align*}
  S_{11} &= 1+\frac{2\pi\kappa}{\omega} e^{-\eta_0} F^{(0)}_0 \left(F^{(0)}_0+\eta_0F^{(1)}_0 \right) \\
  &\rightarrow 1+\frac{2\pi\kappa}{\omega}\left| \frac{(i\kappa/\omega)^{i\kappa/\omega}}{\Gamma(1+i\kappa/\omega)} \right|^2,\\
  S_{22} &=2-S_{11}\rightarrow  1-\frac{2\pi\kappa}{\omega}\left| \frac{(i\kappa/\omega)^{i\kappa/\omega}}{\Gamma(1+i\kappa/\omega)} \right|^2,\\
  S_{12} &= -\frac{2\pi i\kappa}{\omega} e^{-\eta_0} \left(F^{(0)}_0  \right)^2,\\
  &\rightarrow -\frac{2\pi i\kappa}{\omega}  e^{-\eta_0} \left(\frac{(-\eta_0)^{-i\kappa/\omega}}{\Gamma(1-i\kappa/\omega)}\right)^2 ,\\
  S_{21} &= -\frac{2\pi i\kappa}{\omega} e^{-\eta_0} \left(F^{(0)}_0+\eta_0F^{(1)}_0 \right)^2 \\
  &\rightarrow -\frac{2\pi i\kappa}{\omega}  e^{\eta_0} \left(\frac{(\eta_0)^{i\kappa/\omega}}{\Gamma(1+i\kappa/\omega)}\right)^2.
\end{align*}
From this asymptotic behavior, we can conclude that
\begin{itemize}
  \item The diagonal elements are real and the off-diagonal elements satisfy the relation $S_{21}=-S_{12}^\ast$. This is consistent with our general arguments based on the symmetry of the Hamiltonian.  [See the main text, above Eq.(4)]
  \item The ratio $\frac{S_{21}}{S_{12}}$ has the asymptotic expression
  \begin{equation*}
    \frac{S_{21}}{S_{12}}=e^{i\phi} \rightarrow e^{2\eta_0} (-\eta_0^2)^{2i\kappa/\omega} \left( \frac{\Gamma(1-i\kappa/\omega)}{\Gamma(1+i\kappa/\omega)} \right)^2,
  \end{equation*}
  and hence 
  \begin{equation*}
    \phi\approx \frac{4\kappa}{\omega}\left(1+\frac{\rho}{\kappa} +\log(\rho/\kappa)\right) -\pi.
  \end{equation*}
  The important thing is that the ratio $\frac{S_{21}}{S_{12}}$ rotates on the unit circle as $\rho/\kappa$ or $\kappa/\omega$ increases, resulting in an oscillating $\chi_c$. 
\end{itemize}

\textbf{(3) Loops starting from $\theta_i=0$.} Similar asymptotic analysis shows that
\begin{align*}
  S_{11} &\rightarrow 1+\frac{2\pi\kappa}{\omega}\left| \frac{(i\kappa/\omega)^{-i\kappa/\omega}}{\Gamma(1+i\kappa/\omega)} \right|^2,\\
  S_{22} &\rightarrow  1-\frac{2\pi\kappa}{\omega}\left| \frac{(i\kappa/\omega)^{-i\kappa/\omega}}{\Gamma(1+i\kappa/\omega)} \right|^2,\\
  S_{12} &\rightarrow -\frac{2\pi i\kappa}{\omega}  e^{-\eta_0} \left(\frac{(-\eta_0)^{-i\kappa/\omega}}{\Gamma(1-i\kappa/\omega)}\right)^2 ,\\
  S_{21} &\rightarrow -\frac{2\pi i\kappa}{\omega}  e^{\eta_0} \left(\frac{(\eta_0)^{i\kappa/\omega}}{\Gamma(1+i\kappa/\omega)}\right)^2.
\end{align*}
Still, $\frac{S_{21}}{S_{12}}=e^{i\phi}$ and $\phi$ increases monotonically as $\rho/\kappa$ or $\kappa/\omega$ increases, resulting in oscillating $\chi_c$.

\begin{figure}
  \centering
  \includegraphics[width=0.6\textwidth]{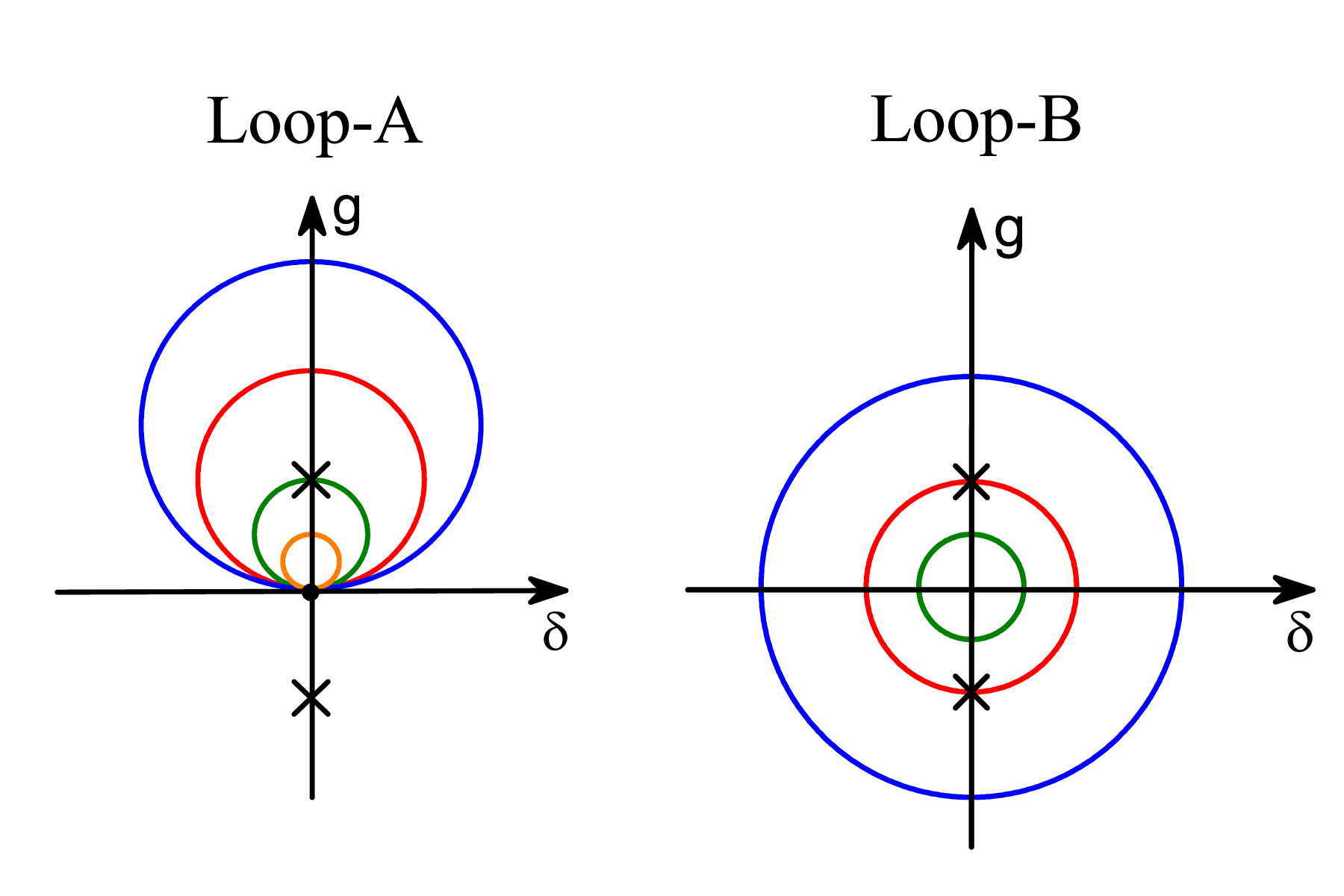}
  \caption{Two more series of loops in the complex $h_z$ plane. Exceptional points are denoted by $\times$. }\label{fig:s1}
\end{figure}

\section{Another Series of Loops: Loop-A}
In addition to the series of loops considered in the main text, we investigate two more series of loops. The first series (Loop-A) is given by setting $g_0=\rho$, i.e., $h_z=i\rho(1-e^{i\omega t})$. 
The second series (Loop-B) is given by setting $g_0=0$, and hence $h_z=-i\rho e^{i\omega t}$. See Fig.\ref{fig:s1} for the geometry of the loops in the complex $h_z$ plane.

\subsection{The Transfer Matrix, the Floquet States}
For Loop-A, we would focus on the $\kappa^2>\rho^2$ case while the $\kappa^2<\rho^2$ case can be obtained similarly.
From the general exact solution given in the main text [Eq.(7)], the transfer matrix reads
\begin{equation}
  S(\theta_f,\theta_i)= \frac{\Gamma(p_1) e^{i(\theta_f-\theta_i) \sqrt{1-\rho^2}/\omega}\, p_1 \eta_i^{p_2-1}}{\Gamma(p_2) e^{(\eta_f+\eta_i)/2}}\, M_f \tilde{M}_i,
\end{equation}
with $p_1=(i\rho+\sqrt{\kappa^2-\rho^2})/\omega$ and $p_2=1+2\sqrt{\kappa^2-\rho^2}/\omega$. 
In one-period, we should note the connection formula for $U^{(n)}$:
\begin{align*}
  U^{(0)} &\rightarrow U^{(0)}e^{-2iT\sqrt{\kappa^2-\rho^2}} -\frac{2\pi i e^{-iT\sqrt{\kappa^2-\rho^2}}}{\Gamma(p_2)\Gamma(1+p_1-p_2)} F^{(0)},\\
  U^{(1)} &\rightarrow U^{(1)} e^{-2iT\sqrt{\kappa^2-\rho^2}} +\frac{2\pi i e^{-iT\sqrt{\kappa^2-\rho^2}}}{p_2\Gamma(p_2)\Gamma(1+p_1-p_2)} F^{(1)}.
\end{align*}

The quasienergies are just given by
\begin{equation}
  \epsilon_{1,2}=\pm\sqrt{\kappa^2-\rho^2}. 
\end{equation}
The corresponding Floquet states could be derived from the exact solution, 
\begin{equation*}
  \left[
     \begin{array}{cc}
       a(t)\\
       b(t)\\
     \end{array}
   \right]=e^{it \sqrt{\kappa^2-\rho^2}} e^{-\eta/2} M(\eta) 
   \left[
     \begin{array}{cc}
       c_1\\
       c_2 \\
     \end{array}
   \right],
\end{equation*}
where $M(\eta)$ is given by Eq.(\ref{eq:loop-A:matrixM}). By setting $c_2=0$ and using the analytic property of $F^{(n)}$, we can obtain the Floquet state corresponding to $\lambda_2=-\sqrt{\kappa^2-\rho^2}$:
\begin{equation}
  \left[
     \begin{array}{c}
       a_2(t)\\
       b_2(t)\\
     \end{array}
   \right]\sim e^{i\sqrt{\kappa^2-\rho^2}t} e^{-\eta/2} 
   \left[
     \begin{array}{c}
       F^{(0)}\\
       -\frac{\omega}{\kappa}p_1\left( F^{(0)} +\frac{\eta}{p_2} F^{(1)} \right) \\
     \end{array}
   \right]. 
\end{equation}
The Floquet state corresponding to $\lambda_1=\sqrt{\kappa^2-\rho^2}$ can be obtained by requiring 
$[a(t_0+T),b(t_0+T)]^T=e^{-i\sqrt{\kappa^2-\rho^2}T} [a(t_0),b(t_0)]^T$, resulting in 
\begin{equation}
  \left[
     \begin{array}{c}
       a_1(t)\\
       b_1(t)\\
     \end{array}
   \right] \sim e^{i\sqrt{\kappa^2-\rho^2}t} e^{-\eta/2} M(\eta)
   \left[
     \begin{array}{c}
      \frac{2\pi i e^{-iT\sqrt{\kappa^2-\rho^2}}}{\Gamma(p_2)\Gamma(1+p_1-p_2)} \\
       1-e^{-2iT\sqrt{\kappa^2-\rho^2}} \\
     \end{array}
   \right].
\end{equation}

\subsection{Results of non-chirality degree}
The non-chirality degree $\chi_c$ in the perfect dynamics can be calculated from the exact solution, while that in noisy dynamics can only computed numerically. According to \textbf{Lessons from Numerical Simulation}, we should use high enough working precision in numerical simulations to avoid additional noise effects brought by roundoff errors.

\begin{figure}
  \centering
  \includegraphics[width=0.9\textwidth]{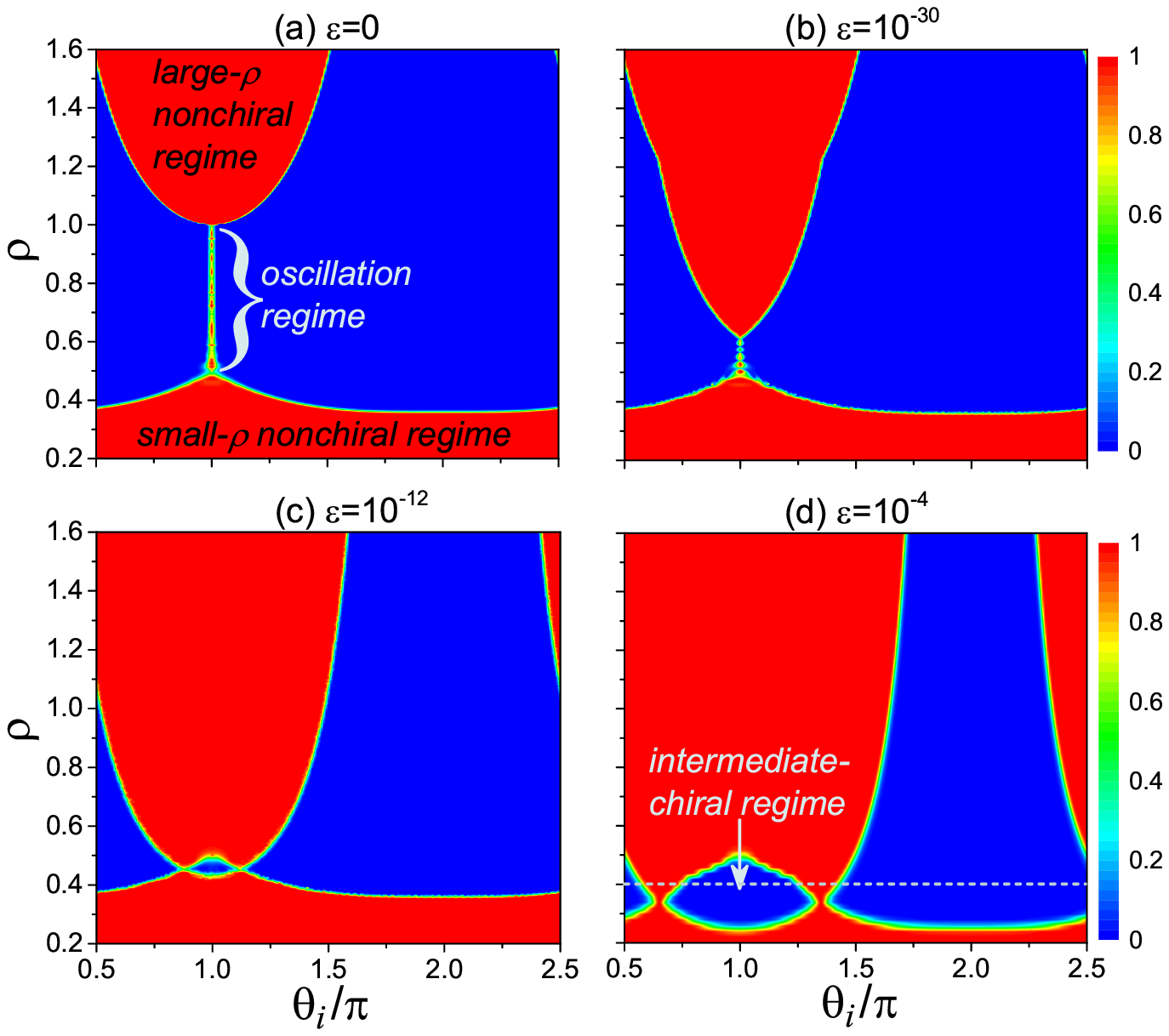}
  \caption{The non-chirality degree $\chi_c$ for Loop-A in the $\rho$-$\theta_i$ plane with fixed $\omega=1/30$ ($\kappa=1$) and (a) $\varepsilon=0$, (b) $\varepsilon=10^{-30}$, (c) $\varepsilon=10^{-12}$ and (d) $\varepsilon=10^{-4}$. The dashed line in (d) indicates the $\rho=0.4$ line, which would be studies in Fig.\ref{fig:s3}.   }\label{fig:s2}
\end{figure}
In Fig.\ref{fig:s2} we plot $\chi_c$ in the $\rho$-$\theta_i$ plane with $\omega=1/30$ ($\kappa=1$) and $\varepsilon=0, 10^{-30}, 10^{-12}$ and $10^{-4}$. In perfect dynamics [Fig.\ref{fig:s2}(a)] we observe three regimes near $\theta_i=\pi$, namely the ``large-$\rho$ nonchiral regime'' for $\rho>1$, the oscillating regime for $0.5<\rho<1$ and the ``small-$\rho$ nonchiral regime'' for $\rho<0.5$. The behavior near $\theta_i=0$ (or equivalently $2\pi$) is quite different: there are only two distinct regimes, namely the chiral regime for large $\rho$ and the nonchiral regime for small $\rho$. However, the critical boundary $\rho_c$ is not easy to derived analytically. As the noise strength increases, the lower boundary of the ``large-$\rho$ nonchiral regime'' move downward and the oscillating regime is reduced [Fig.\ref{fig:s2}(b)]  until vanishing at some noise strength. Thereafter the ``large-$\rho$ nonchiral regime'' overlaps with the ``small-$\rho$ nonchiral regime'' and a strange chiral regime (which we call ``intermediate chiral regime'') emerges in the overlapping region [Fig.\ref{fig:s2}(c) and (d)]. Note that when $\rho<0.5$ the loops do not encircle any EP, where the non-chirality degree still shows complicated behavior in the parameter space.

\begin{figure}
  \centering
  \includegraphics[width=0.6\textwidth]{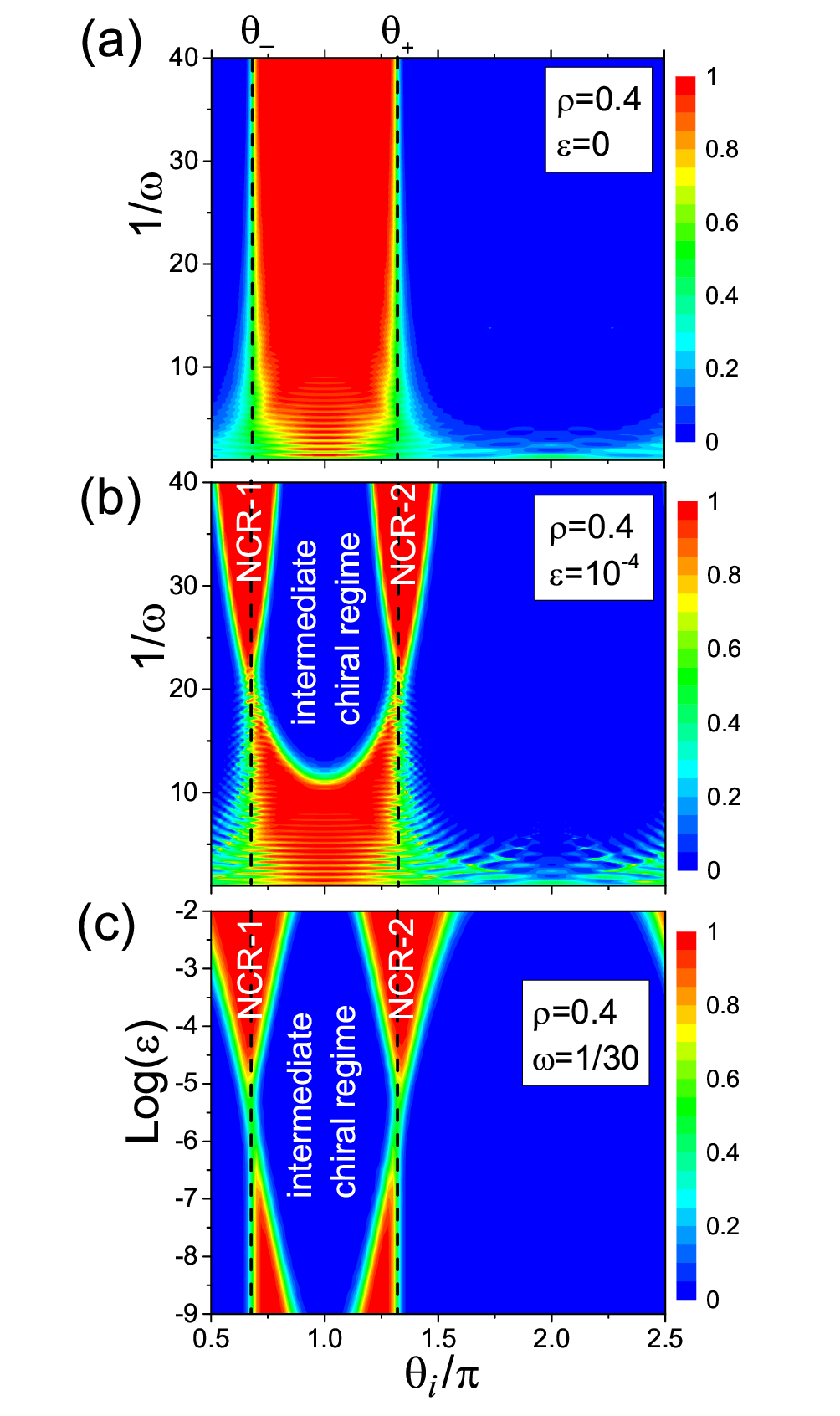}
  \caption{The non-chirality degree $\chi_c$ for Loop-A with $\rho=0.4$ ($\kappa=1$). Both perfect dynamics (a) and noisy dynamics (b) are computed in the $(1/\omega)$-$\theta_i$ plane. The bottom panel (c) shows the $\chi_c$ as a function of the starting point $\theta_i$ and noise strength $\varepsilon$ for fixed speed $\omega=1/30$. The dashed lines depict two critical starting points $\theta_{\pm}$. NCR-1 and NCR-2 denotes two ``nonchiral regimes''.   }\label{fig:s3}
\end{figure}
In Fig.\ref{fig:s3} we plot the non-chirality degree $\chi_c$ as a function of $\omega$ and the starting position $\theta_i$ in both perfect ($\varepsilon=0$) and noisy ($\varepsilon=10^{-4}$) dynamics. In the fast evolution regime $\chi_c$ shows complicated oscillations, while in the slow evolution regime $\chi_c$ is close to either $1$ or $0$, with a sharp transition at some critical point. In the perfect dynamics  [Fig.\ref{fig:s3}(a)] the transition happens at two critical points $\theta_{\pm}$, indicated by dashed lines in the figure. In the noisy dynamics [Fig.\ref{fig:s3}(b)] the critical lines breaks into four, resulting in two nonchiral regimes (named NCR-1 and NCR-2) between which exists the ``intermediate chiral regime'' observed in Fig.\ref{fig:s2}(d). 
NCR-1 and NCR-2 terminate at some critical speed and $\theta_{\pm}$  [Fig.\ref{fig:s3}(b)]. In Fig.\ref{fig:s3}(c) we plot the effect of different noise strength, and also observe NCR-1, NCR-2 and the intermediate chiral regime. By comparing Fig.\ref{fig:s3}(b) and (c) we observe similar behavior in the large $1/\omega$ and the large $\log(\varepsilon)$ regions, which is just a manifestation of \emph{\textbf{the $\omega$-$\varepsilon$ competition}} discovered in the main text.

\begin{figure}
  \centering
  \includegraphics[width=0.9\textwidth]{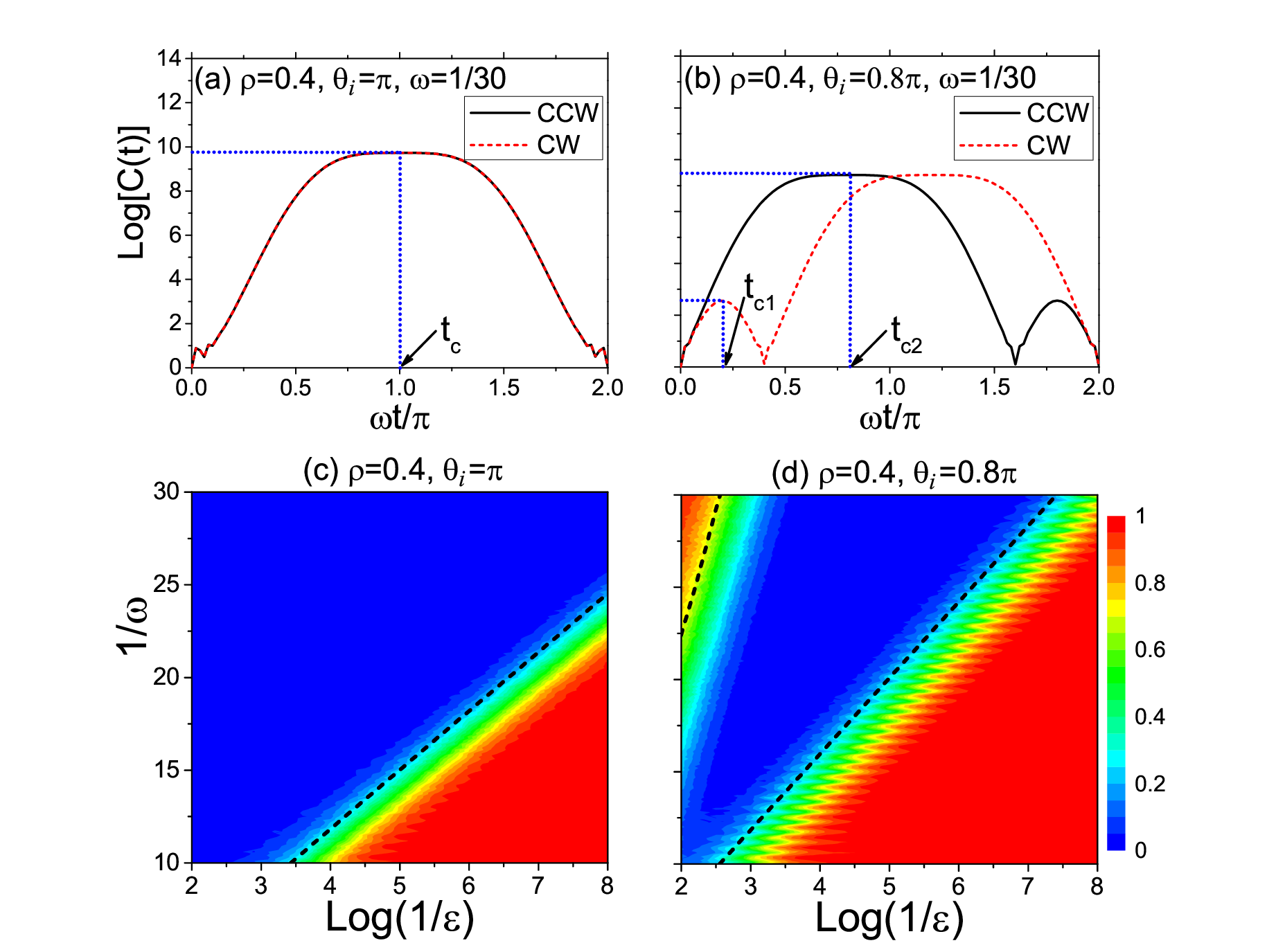}
  \caption{The non-chirality degree $\chi_c$ in the $\log(1/\varepsilon)$-$(1/\omega)$ plane and the logarithm of the condition number $C(t)$ for Loop-A with $\rho=0.4$, demonstrating the speed-noise competition similar to that in Fig.4 in the main text.    } \label{fig:s4}
\end{figure}
The speed-noise ($\omega$-$\varepsilon$) competition is studied for $\rho=0.4$ and $\theta_i=\pi, 0.8\pi$ in Fig.\ref{fig:s4}. 
We also observe similar competition behavior as that in the main text, although now the loops for $\rho=0.4$ do not encircle any EP. The critical boundary still has the scaling law $1/\omega\sim \log(1/\varepsilon)$ [Fig.\ref{fig:s4}(c) and (d)]. However, for $\theta_i=0.8\pi$ there exist two critical boundaries both of which satisfies the above scaling law. Such boundaries could still explained from the condition number $C(t)$ of the transfer matrix [Fig.\ref{fig:s4}(a) and (b)].

\section{Another Series of Loops: Loop-B}
\subsection{The Transfer Matrix, the Floquet States}
In Loop-B we have $h_z=-i\rho e^{i\omega t}$. In this section we use $t$ instead of $\theta=\omega t$ as the arguments of the transfer matrix $S$. Then the general solution of the state vector $(a,b)^T$ can be expressed as
\begin{equation}\label{eq:Loop-B:abMc}
  \left[\begin{array}{cc}
    a(t)  \\
    b(t)
  \end{array} \right] =e^{i\kappa t} e^{-\eta/2}  M(t) \left[ \begin{array}{cc}
    c_1  \\
    c_2
  \end{array}\right],
\end{equation}
where
\begin{equation}
  M(t)=\left[
         \begin{array}{cc}
           F^{(0)} & \quad U^{(0)} \\
           -\left(F^{(0)} +\frac{\eta}{1+2\kappa/\omega} F^{(1)} \right) & \quad -U^{(0)} +\eta U^{(1)} \\
         \end{array}
       \right].
\end{equation}
The coefficients $c_1,c_2$ are determined by the initial condition at $t=t_0$:
\begin{equation}
  \left[\begin{array}{cc}
    a_0  \\
    b_0
  \end{array} \right] =e^{i\kappa t_0} e^{-\eta_0/2}  M_0 \left[ \begin{array}{cc}
    c_1  \\
    c_2
  \end{array}\right],
\end{equation}
where $(a_0,b_0)\equiv (a(0), b(0))$ and $M_0\equiv M(t=0)$. Then
\begin{equation}
  \left[ \begin{array}{cc}
    c_1  \\
    c_2
  \end{array}\right] =e^{-i\kappa t_0} e^{\eta_0/2}  M_0^{-1} \left[\begin{array}{cc}
    a_0  \\
    b_0
  \end{array} \right],
\end{equation}
and hence the transfer matrix $S(t)$ reads
\begin{equation}
  S(t,t_0)=e^{i\kappa (t-t_0)} e^{-(\eta-\eta_0)/2}  M(t) M_0^{-1}.
\end{equation}
The inverse of $M_0$ can be expressed in terms of its determinant $\det{M_0}$ and adjoint matrix $M_0^{\ddag}$, $M_0^{-1}=\frac{1}{\det{M_0}} M_0^{\ddag}$, where
\begin{eqnarray}
  \det{M_0} &=& F_0^{(0)} \left(-U_0^{(0)}+\eta_0 U_0^{(1)} \right) +U_0^{(0)} \left(F_0^{(0)}+ \frac{\eta_0}{1+2\kappa/\omega} F_0^{(1)} \right) \nonumber \\
  &=& \eta_0 \left(F_0^{(0)}  U_0^{(1)} +\frac{1}{1+2\kappa/\omega} F_0^{(1)}  U_0^{(0)} \right) \nonumber \\
  &=& \frac{\Gamma(1+2\kappa/\omega)}{\Gamma(1+\kappa/\omega)} \eta_0^{-2\kappa/\omega} e^{\eta_0}, \\
 M_0^{\ddag} &=& \left[
         \begin{array}{cc}
          -U^{(0)}_0 +\eta_0 U^{(1)}_0 & \quad -U^{(0)}_0 \\
           F^{(0)}_0 +\frac{\eta_0}{1+2\kappa/\omega} F^{(1)}_0  & \quad F^{(0)}_0  \\
         \end{array}
       \right].
\end{eqnarray}
Then the transfer matrix can be rewritten as
\begin{equation}
  S(t,t_0)=e^{i\kappa (t-t_0)} e^{-(\eta+\eta_0)/2} \eta_0^{2\kappa/\omega} \frac{\Gamma(1+\kappa/\omega)}{\Gamma(1+2\kappa/\omega)}  M(t) M_0^{\ddag}.
\end{equation}

Using the analytic continuation as $\theta_0\rightarrow \theta_0+2\pi$:
\begin{align*}
  U^{(0)} &\rightarrow U^{(0)} e^{-4\pi i \kappa/\omega}- \frac{2\pi i e^{-2\pi i\kappa/\omega}} {\Gamma(1+2\kappa/\omega)\Gamma(-\kappa/\omega)} F^{(0)}, \\
  U^{(1)} &\rightarrow U^{(1)} e^{-4\pi i \kappa/\omega} + \frac{2\pi i e^{-2\pi i\kappa/\omega}} {(1+2\kappa/\omega)\Gamma(1+2\kappa/\omega)\Gamma(-\kappa/\omega)} F^{(1)},
\end{align*}
the transfer matrix in one period reads
\begin{align*}
 & S(t_0+T,t_0) = e^{i\kappa T} M(t_0+T) M_0^{-1}\\
  &=e^{i\kappa T} \left\{ M_0 +\left[
         \begin{array}{cc}
         0 & U^{(0)}_0(e^{-4\pi i \frac{\kappa}{\omega}}-1) - \frac{2\pi i e^{-2\pi i\kappa/\omega}} {\Gamma(1+2\kappa/\omega)\Gamma(-\kappa/\omega)} F^{(0)} \\
         0 & (-U^{(0)}_0+\eta_0U^{(1)}_0) (e^{-4\pi i \frac{\kappa}{\omega}}-1) +\frac{2\pi i e^{-2\pi i\kappa/\omega}} {\Gamma(1+2\frac{\kappa}{\omega})\Gamma(-\frac{\kappa}{\omega})} (F^{(0)}_0+\frac{\eta_0}{1+2\kappa/\omega} F^{(1)}) \\
         \end{array}
       \right] \right\} M_0^{-1} \\
  &=e^{i\kappa T} \left\{ \mathds{I}+\left[
         \begin{array}{cc}
         0 & U^{(0)}_0(e^{-4\pi i \frac{\kappa}{\omega}}-1) - \frac{2\pi i e^{-2\pi i\kappa/\omega}} {\Gamma(1+2\kappa/\omega)\Gamma(-\kappa/\omega)} F^{(0)} \\
         0 & (-U^{(0)}_0+\eta_0U^{(1)}_0) (e^{-4\pi i \frac{\kappa}{\omega}}-1) +\frac{2\pi i e^{-2\pi i\kappa/\omega}} {\Gamma(1+2\frac{\kappa}{\omega})\Gamma(-\frac{\kappa}{\omega})} (F^{(0)}_0+\frac{\eta_0}{1+2\kappa/\omega} F^{(1)}) \\
         \end{array}
       \right]  M_0^{-1} \right\}. \\
\end{align*}

From the above exact solution we observe a strange property in this series of loops:  
\begin{itemize}
  \item \textbf{When $\kappa/\omega$ is an integer, $\Gamma(-\kappa/\omega)\rightarrow\infty$ and hence $S(t_0+T,t_0) \rightarrow \mathds{I}$ and $\chi_c=1$  exactly.}
\end{itemize}

The quasienergies are $\lambda_{1,2}=\pm\kappa$. The corresponding Floquet states could be derived from the exact solution in the following way. In Eq.(\ref{eq:Loop-B:abMc}), setting $c_2=0$ gives
\begin{equation*}
  [a,b]^T=e^{i\kappa t} e^{-\eta/2} \left[F^{(0)}, -F^{(0)}-\frac{\eta}{1+2\kappa/\omega} F^{(1)} \right]^T.
\end{equation*}
After one-period evolution, $t_0\rightarrow t_0+T$, we have
\begin{equation*}
  [a(t_0+T),b(t_0+T)]^T= e^{i\kappa T} [a(t_0),b(t_0)]^T.
\end{equation*}
So one quasienergy is $-\kappa$ and the corresponding Floquet mode is just given by the above $[a,b]^T$.

The other quasienergy should be $\kappa$. The corresponding Floquet mode should also takes the form of  Eq.(\ref{eq:Loop-B:abMc})
\begin{equation*}
  \left[
     \begin{array}{cc}
       a(t) \\
       b(t) \\
     \end{array}
   \right]=e^{i\kappa t} e^{-\eta/2} M(t) \left[
     \begin{array}{cc}
      c_1 \\
      c_2 \\
     \end{array}
   \right].
\end{equation*}
By setting $[a(t_0+T),b(t_0+T)]^T= e^{-i\kappa T} [a(t_0),b(t_0)]^T$, we obtain 
\begin{equation*}
  \left[
     \begin{array}{cc}
      c_1 \\
      c_2 \\
     \end{array}
   \right]\sim \left[
     \begin{array}{cc}
       \frac{2\pi i e^{-2\pi i\kappa/\omega}}{\Gamma(1+2\kappa/\omega)\Gamma(-\kappa/\omega)} \\
        1-e^{-4\pi i\kappa/\omega}\\
     \end{array}
   \right].
\end{equation*}
 
\subsection{Results of non-chirality degree}
\begin{figure}
  \centering
  \includegraphics[width=0.9\textwidth]{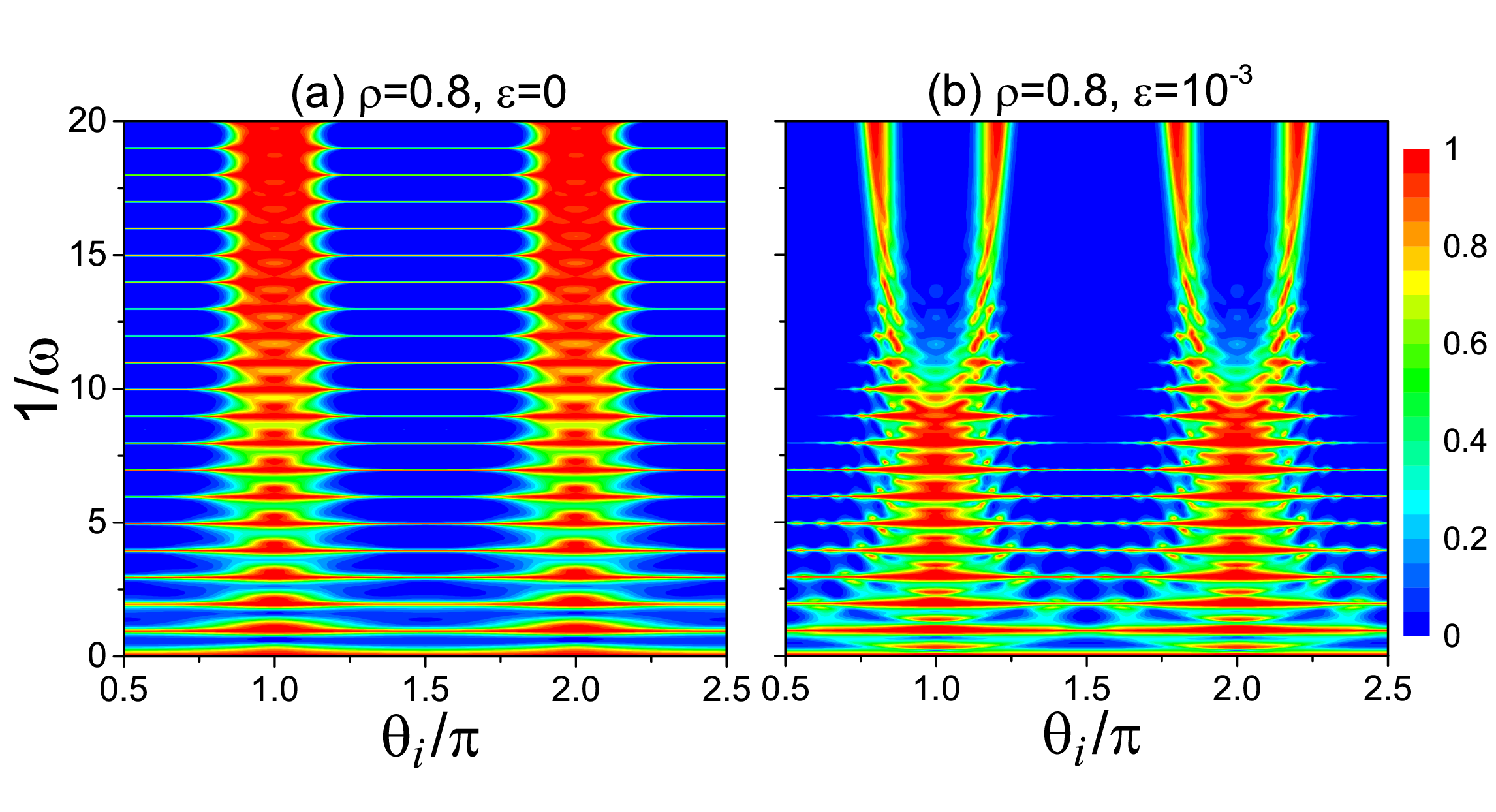}
  \caption{The non-chirality degree $\chi_c$ for Loop-B in the $1/\omega$-$\theta_i$ plane with $\rho=0.8$ ($\kappa=1$) and (a) $\varepsilon=0$, (b) $\varepsilon=10^{-3}$.   }\label{fig:s5}
\end{figure}
In Fig.\ref{fig:s5} we plot the non-chirality degree $\chi_c$ for Loop-B with $\rho=0.8$. In perfect dynamics ($\varepsilon=0$) and large $\omega$, the non-chirality degree $\chi_c$ shows oscillation behavior for arbitrary starting points $\theta_i$. However, for small $\omega$, the oscillations near $\theta_i=0,\pi$ vanish and are replaced by nonchiral regimes ($\chi_c\approx1$) [Fig.\ref{fig:s5}(a)]. A notable feature in the perfect dynamics is that when $\kappa/\omega$ is an integer, the non-chirality degree $\chi_c=1$ exactly. However, as the noise strength increases, the dynamical behavior with slow speed changes significantly. When $\varepsilon=10^{-3}$, the chirality oscillation still exist in the fast (small $1/\omega$)  region, while in the slow (large $1/\omega$) region, only four nonchiral regimes and four chiral regimes survive [Fig.\ref{fig:s5}(b)], with sharp transitions between them.

\begin{figure}
  \centering
  \includegraphics[width=0.9\textwidth]{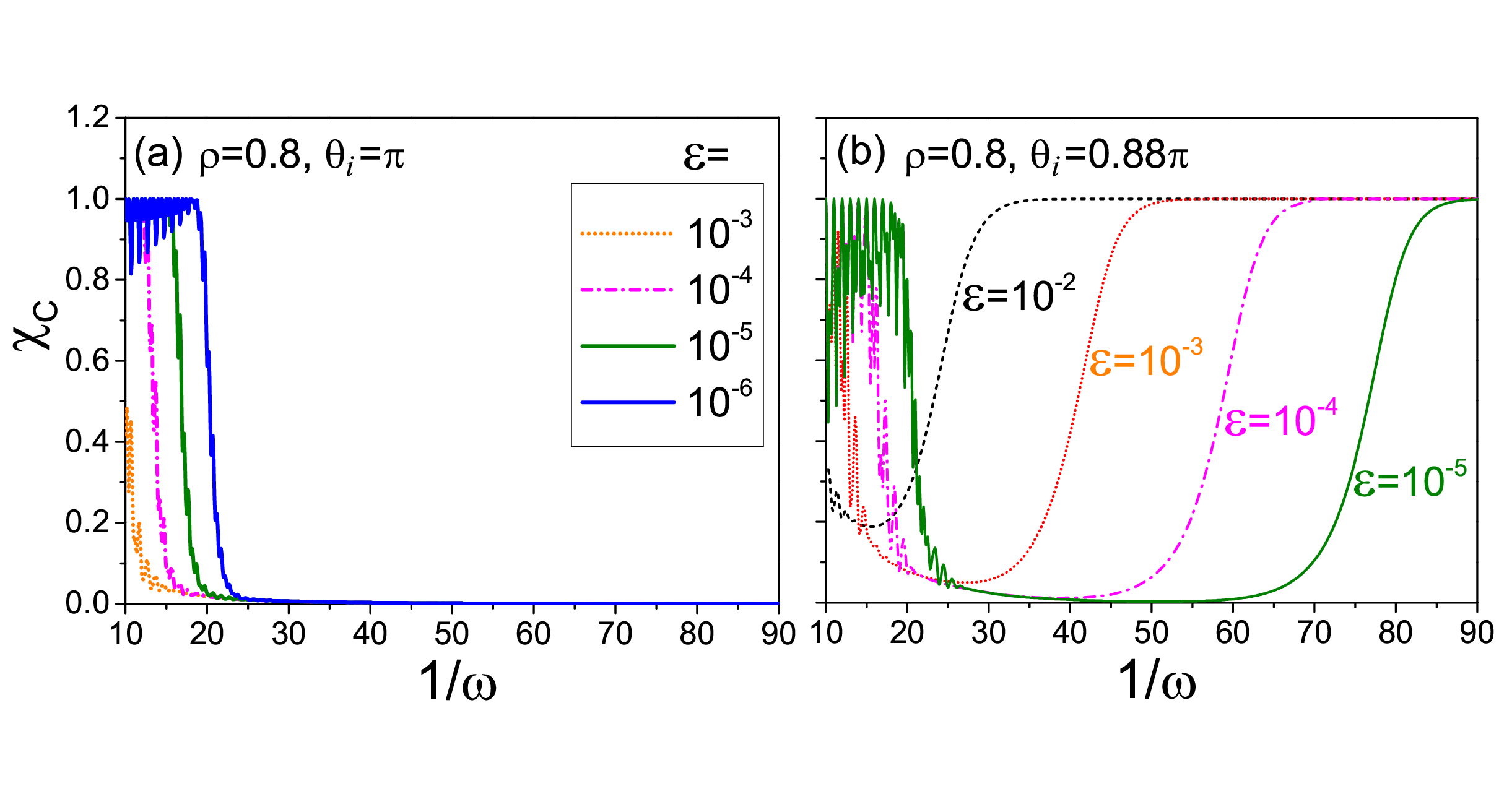}
  \caption{The non-chirality degree as a function of $1/\omega$ for Loop-B with $\rho=0.8$ and (a) $\theta_i=\pi$, (b) $\theta_i=0.88\pi$.     } \label{fig:s6}
\end{figure}
To further study the dynamical behavior in the slow evolution limit, in Fig.\ref{fig:s6} we plot the non-chirality degree as a function of $1/\omega$ in a larger range with $\rho=0.8$ and $\theta_i=\pi,0.88\pi$. When $\theta_i=\pi$ [Fig.\ref{fig:s6}(a)], we could  identify two distinct regimes: the fast regime where $\chi_c$ oscillates and the slow regime where $\chi_c\approx 0$. The critical boundary moves toward larger $1/\omega$ as $\log(1/\varepsilon)$ increases, indicating a scaling law $\log(1/\varepsilon)\sim (1/\omega)$.  When $\theta_i=0.88\pi$ [Fig.\ref{fig:s6}(b)], we could  identify three distinct regimes as $\varepsilon$ is very small: the fast regime where $\chi_c$ oscillates, the intermediate regime where $\chi_c\approx0$ and the slow regime where $\chi_c\approx 1$. The two critical boundaries move toward larger $1/\omega$ as $\log(1/\varepsilon)$ increases, indicating that both of them satisfy a scaling law $\log(1/\varepsilon)\sim (1/\omega)$, similar to Fig.\ref{fig:s4}(d).

\begin{figure}
  \centering
  \includegraphics[width=0.9\textwidth]{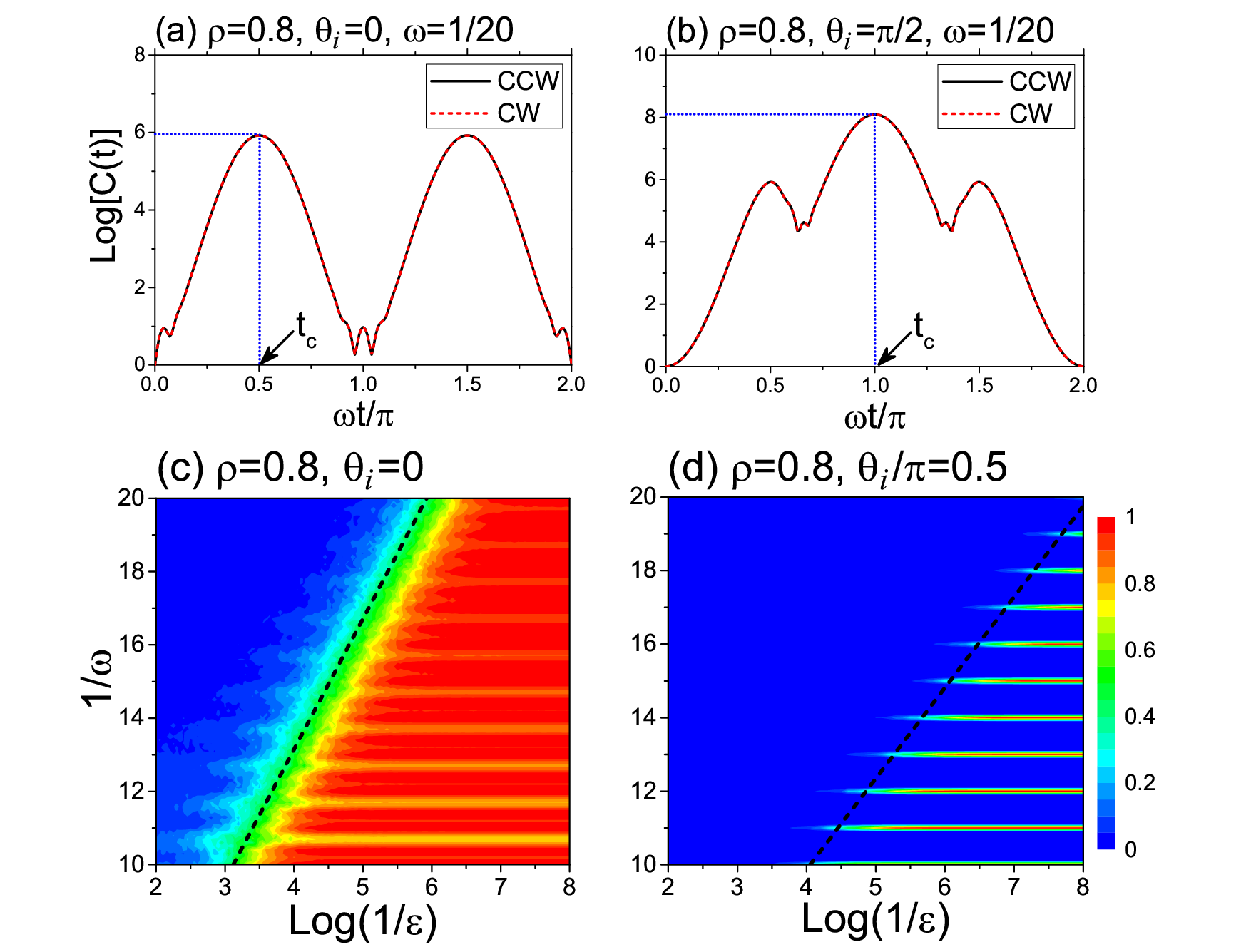}
  \caption{The non-chirality degree $\chi_c$ in the $\log(1/\varepsilon)$-$(1/\omega)$ plane and the logarithm of the condition number $C(t)$ for Loop-B with $\rho=0.8$, demonstrating the speed-noise competition similar to that in Fig.4 in the main text.     } \label{fig:s7}
\end{figure}
In Fig.\ref{fig:s7} we study the speed-noise competition and the critical boundary in Loop-B for $\rho=0.8$ and $\theta_i=0,\pi/2$. We could also find the scaling law $\log(1/\varepsilon)\sim(1/\omega)$ its relation to the condition number $C(t)$ of the transfer matrix, supporting the conclusion in the main text.

\section{Conjecture on How to Find the $\omega$-$\varepsilon$ Competition Critical Boundary from the Condition Number $C(t)$ }
We see that not all maximums of $C(t)$ correspond to critical boundary in the  $\omega$-$\varepsilon$ competition plane. But all critical boundaries  correspond to some maximum of $C(t)$. From Fig.4 in the main text and Fig.\ref{fig:s4} and Fig.\ref{fig:s7}, we make the following \textbf{\emph{conjecture}} about how to find the $\omega$-$\varepsilon$ competition critical boundary from the condition number $C(t), t\in[0, T=2\pi/\omega]$. 

(i) Find the max local maximum of $C(t)$ in $(0,T)$ (note that not $[0,T]$) at $t_{c,0}<T$. If there exists one time $t_\ast\in(t_{c,0},T)$ such that $C(t_\ast)\approx 1$, then $\varepsilon_c C(t_{c,0})=1$ corresponds to one critical boundary in the   $\log(1/\varepsilon)$-$(1/\omega)$ plane. 

(ii) Then find the max local maximum of $C(t)$ in $(0,t_{c,0})$ at $t_{c,1}<t_{c,0}$. If there exists one time $t_\ast\in(t_{c,1},t_{c,0})$ such that $C(t_\ast)\approx 1$, then $\varepsilon_c C(t_{c,1})=1$ corresponds to one critical boundary in the $\log(1/\varepsilon)$-$(1/\omega)$ plane. 

(iii) Repeat (ii) until all critical times are found.

One can check this conjecture for the cases given in Fig.4 in the main text and Fig.\ref{fig:s4}, Fig.\ref{fig:s7}. We also add more results for the loops in the main text in Fig.\ref{fig:s8}, supporting the above conjecture. In  Fig.\ref{fig:s8}(a) we note that the chirality is especially stable with respect to noise when the loop starts from the symmetric phase. This has been experimentally verified recently by Lu \emph{et al.} [Commun. Phys. 8, 91 (2025)]. However, they did not investigate the stability for loops starting from general points. 

\begin{figure}
  \centering
  \includegraphics[width=\textwidth]{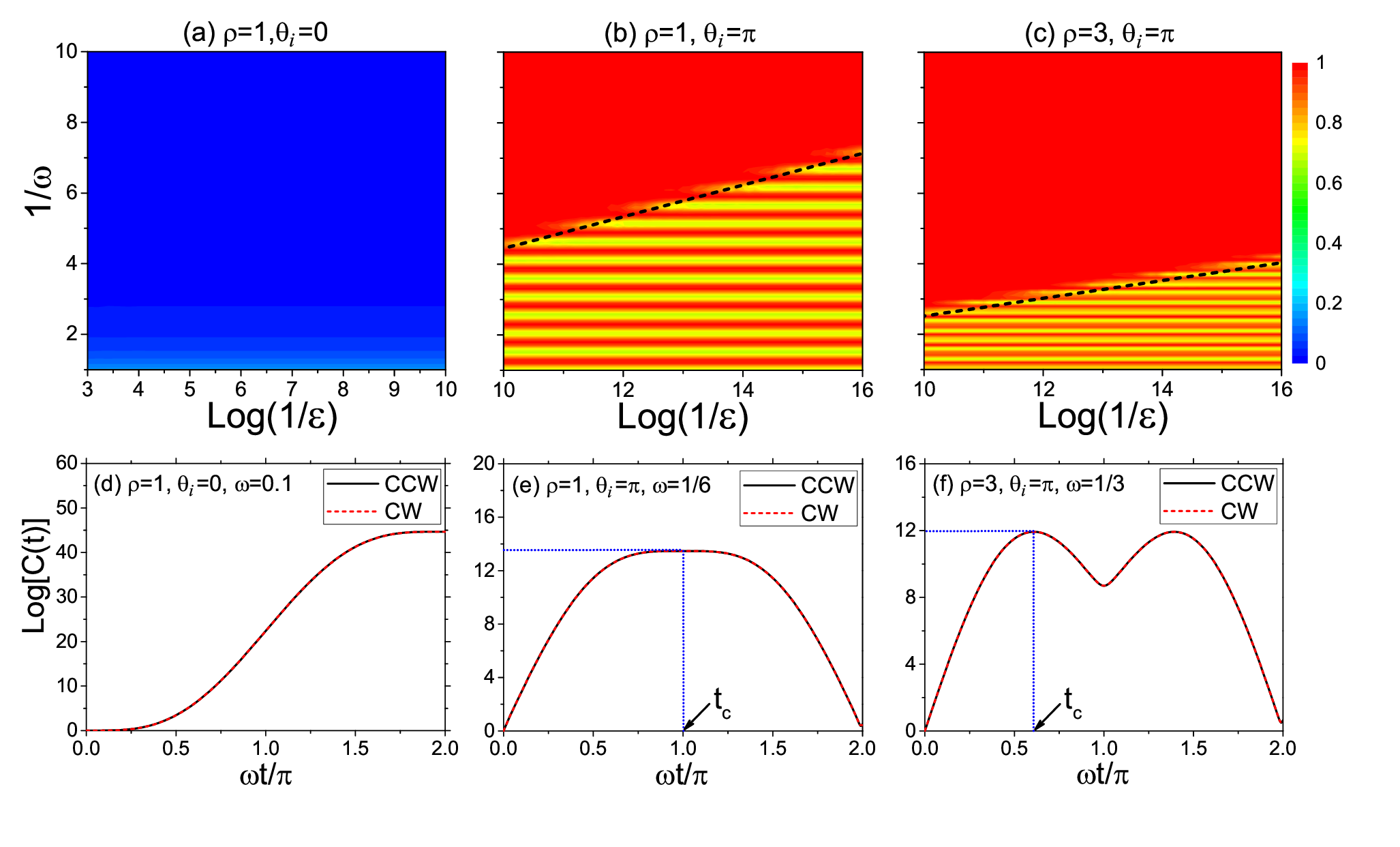}
  \caption{More plots about the speed-noise competition for loops in the main text (supplementary to Fig.4 in the main text). } \label{fig:s8}
\end{figure}

\section{Remarks on PRX8,021066 (2018)}
We make the following remarks on PRX8,021066 (2018) 

(1) The solution is correct, but the asymptotic analysis is wrong. Correct asymptotic analysis is given in our main text [Eqs.(10a,10b)]. 

(2) The nonchiral dynamics ($\chi_c=1$) starting from points in the broken phase is a result of noise but not an intrinsic property of the encircling loops. There is a crossover from oscillating $\chi_c$ (perfect limit) to $\chi_c=1$ (noisy limit) in the $\omega$-$\varepsilon$ plane, where $\varepsilon$ is a measure of the noise strength. 
This underlying physical reason of the nonchiral dynamics was not recognized in PRX8,021066 (2018). Instead, they mistook the physics as a result of the starting point (in the broken phase). In fact, the starting point just affect the sensitivity of the dynamics to noise and hence the critical boundary in the $\omega$-$\varepsilon$ plane.

(3) The experimental results correspond to the noisy limit due to unavoidable noise in their experimental setup. 
They should be able to observe the transition from the nonchiral dynamics in the noisy limit to the chirality oscillation in the perfect limit by increasing the encircling speed with fixed loop radius and underlying noise strength.


\end{document}